\newtheorem{theorem}{Theorem}
\newtheorem{assumption}{Assumption}
\newtheorem{problem}{Problem}
\newtheorem{proof}{Proof}
\begin{document}

\begin{frontmatter}

\title{When Distributed Formation Control Is Feasible under Hard Constraints on Energy and Time?  \thanksref{footnoteinfo}}
\thanks[footnoteinfo]{This work was supported in part by the National Natural Science Foundation of China under Grants 61973064 and 61973061, and in part by the Hebei Natural Science Foundation for Distinguished Young Scholars under Grant F2019501043.  Corresponding author: Fei Chen (fei.chen@ieee.org).}

\author[NEU,NEUQ]{Chunxiang Jia},
\author[NEU,NEUQ]{Fei Chen},
\author[NEU,NEUQ]{Linying Xiang},
\author[XMU]{Weiyao Lan}, 
\author[CityU]{Gang Feng}
\address[NEU]{State Key Laboratory of Synthetical Automation for Process Industries, Northeastern University, Shenyang, 110004, China}
\address[NEUQ]{School of Control Engineering, Northeastern University at Qinhuangdao, Qinhuangdao, 066004, China}
\address[XMU]{Department of Automation, Xiamen University, Xiamen, 361005, China}
\address[CityU]{Department of Biomedical Engineering, City University of Hong Kong, Kowloon, Hong Kong SAR, China}

\begin{keyword}
Energy constraint; time constraint; formation control; distributed control; optimal control; multi-agent system.
\end{keyword}

\begin{abstract}
This paper studies distributed optimal formation control with hard constraints on energy levels and termination time, in which the formation error is to be minimized jointly with the energy cost. The main contributions include a globally optimal distributed formation control law and a comprehensive analysis of the resulting closed-loop system under those hard constraints. It is revealed that the
energy levels, the task termination time, the steady-state error tolerance, as well as the network topology impose inherent limitations in achieving the formation control mission. Most notably, the lower bounds on the achievable termination time and the required minimum energy levels are derived, which are given in terms of the initial formation error, the steady-state error tolerance, and the largest eigenvalue of the Laplacian matrix. These lower bounds can be employed to assert whether an energy and time constrained formation task is achievable and how to accomplish such a task. Furthermore, the monotonicity of those lower bounds in relation to the control parameters is revealed. A simulation example is finally given to illustrate the obtained results.
\end{abstract}

\end{frontmatter}

\section{Introduction}
\label{sect:intr}
This paper is concerned with energy and time constraints and performance tradeoff issues one frequently encounters in distributed formation control of multi-agent systems. A fundamental problem under investigation is how energy level, mission termination time, and steady-state error tolerance may inherently impact on the achievable performance of formation control, and how such impacts may be quantified analytically. Formation control problems have been widely studied in the recent literature (see, e.g., \cite{oh2015survey,su2009flocking,chen2017connection,beard2001coordination,balch1998behavior,lin2005necessary} and the references therein). However, only a rather limited number of works have considered energy constraints \cite{weimerskirch2001energy,derenick2011energy,papakostas2018energy,sardellitti2011optimal}, though the issue is of significant importance for agents with limited energy supplied by on-board batteries.

The energy and time constraints impose severe limitations on distributed cooperative control design and have motivated several existing works involving various cooperative tasks \cite{babazadeh2018cooperative,babazadeh2018anoptimal,zhang2018consensus,moarref2014optimal,mei2015distributed,xiang2019advances}, wherein the energy cost is defined as an integral of the square of the input, and is to be minimized, together, with certain control error functions. Other relevant attempts have been pursued by researchers to reduce redundant communication to decrease the energy cost \cite{demirel2017trade,varma2019energy}. In addition, it has been recognized that the resistance caused by velocity mismatches may also contribute to the energy expenditure, which cannot be ignored for systems with relatively high velocities \cite{niu2017numerical,chu2014numerical}.

The LQR-based method is just one case of many efforts which seek to limit the energy consumption. It is noted that a direct application of the LQR-based method to multi-agent systems will generically require an all-to-all network topology (see, e.g., \cite{cao2009optimal,di2012rendezvous}). That is, there is a dilemma between distributed control and LQR-based optimal control. Very recently, a network approximation approach  is developed in \cite{chen2019minimum} by introducing a ``minimal'' distribution cost in the LQR function, which guarantees that the resulting control law is optimal in the global sense.

The present paper continues the aforementioned development in the study of energy-aware formation control of multi-agent systems. The main contributions are three-fold. Firstly, a distributed formation control law is derived which is globally optimal with respect to a cost pertinent to energy and control error of the multi-agent system under the LQR framework. To the best of the authors' knowledge, the proposed algorithm is the first formation control algorithm that is concurrently distributed and optimal while satisfying the hard constraints on energy expenditure and convergence time. Secondly, the conditions on the feasibility of the formation control problem are derived analytically, which depends upon the initial energy level, the formation termination time, the steady-state error tolerance, the network topology, as well as the control parameters. Thirdly, monotonicity properties of the achievable termination time and the required minimum initial energy  with respect to the control parameters are further revealed, which provides some design guidelines in achieving formation control missions under time and energy constraints. A preliminary version of the results discussed here has appeared in \cite{jia2020distributed}. With respect to \cite{jia2020distributed}, the current version provides a comprehensive analysis on the monotonicity properties of the PARE solution, the termination time, as well as the energy expenditure. Moreover, numerical examples are also provided to illustrate the validity of the proposed results. 



The rest of this paper is organized as follows. In Section~\ref{sect:prel}, preliminaries are presented and the problem is formulated. Section~\ref{sect:contr} is devoted to the development of the optimal distributed control algorithm and its analysis. Section~\ref{sec:IV} discusses the monotonicity properties of the achievable termination time and the required minimum energy with respect to the control parameters. Simulation results are presented in Section~\ref{sect:sim}. Finally, Section~\ref{sect:concl} concludes the paper.

\section{Preliminaries and problem statement}
\label{sect:prel}

\subsection{Notation}
Let $\mathbb{R}$ denote the set of real numbers, $\mathbb{R}^+$ the set of positive real numbers, $\mathbb{R}^n$ the set of $n$-dimensional real vectors, and $\mathbb{R}^{n\times n}$ the set of $n\times n$ real matrices.  Let $I_n\in \mathbb{R}^{n\times n}$ be the $n$-dimensional identity matrix, $\mathbf{0}_n\in \mathbb{R}^n$ the vector with all zeros, and $\mathbf{1}_n\in \mathbb{R}^n$ the vector with all ones. The subscripts of $I_n$, $\mathbf{0}_n$, and $\mathbf{1}_n$ might be dropped if no confusion arises from the context. The superscript $T$ denotes the transpose of a matrix or a vector. The set of the eigenvalues of $A$ is denoted by $\mathrm{spec}(A)$. The Euclidean norm is given by $\Vert\cdot\Vert$. For two matrices $A\in \mathbb{R}^{m\times n}$ and $B\in \mathbb{R}^{p\times q}$, their Kronecker product is denoted by
\begin{align*}
A\otimes B=\left[
             \begin{array}{ccc}
               a_{11}B & \cdots & a_{1n}B \\
               \vdots & \ddots & \vdots \\
               a_{m1}B & \cdots & a_{mn}B \\
             \end{array}
           \right].
\end{align*}
The abbreviation ``iff'' means ``if and only if''.

\subsection{Graph theory}
The information exchange among the agents is described by a graph $\mathcal{G}=(\mathcal{V},\mathcal{E})$, where $\mathcal{V}=\{\nu_1,\dots,\nu_N\}$ is the set of nodes and $\mathcal{E} \subseteq \mathcal{V}\times \mathcal{V}$ is the set of edges. In this paper, the graph $\mathcal{G}$ is assumed to be undirected. The adjacency matrix $\mathcal{A}=[a_{ij}]\in \mathbb{R}^{N\times N}$ of $\mathcal{G}$ is defined as: $a_{ij}=1$ if $(i,j)\in \mathcal{E}$, and $a_{ij}=0$ otherwise. The degree matrix is then given by $\mathcal{D}=\mathrm{diag}([d_1,\dots,d_N])$, where $d_i=\sum_{j=1}^{N}a_{ij}$. A path from node $\nu_i$ to node $\nu_j$ is a sequence of nodes $\nu_i,\dots,\nu_j$, such that each two consecutive nodes in the sequence is connected by an edge. An undirected graph is connected if for any two vertices in $\mathcal{V}$, there always exists a path connecting them. Throughout the paper, the following assumption is made.
\begin{assumption}\label{assump1}
Graph $\mathcal{G}$ is undirected and connected.
\end{assumption}
The Laplacian matrix of the undirected graph $\mathcal{G}$ is given by $\mathcal{L}=\mathcal{D}-\mathcal{A}\in \mathbb{R}^{N\times N}$, which is known to be symmetric and positive semi-definite. It has a zero eigenvalue whose normalized eigenvector is $\frac{1}{\sqrt{N}}\mathbf{1}_N$, where $\textbf{1}_N\in \mathbb{R}^N$ is the vector with all ones. The $N$ real eigenvalues of $\mathcal{L}$ can be ordered as $0=\lambda_1\leq\lambda_2\leq\dots\leq\lambda_N$. Let $\mathcal{W}=[w_1,\dots, w_N]^T$ be the matrix comprising orthonormal eigenvectors of $\mathcal{L}$. The Laplacian matrix $\mathcal{L}$ can be diagonalized as follows:
\begin{align}\label{1}
\mathcal{L}=\mathcal{W}^T \mathcal{J} \mathcal{W},
\end{align}
where $\mathcal{J}=\mathrm{diag}([\lambda_1,\dots,\lambda_N])$.

\subsection{Problem statement}
Consider a multi-agent system consisting of $N$ agents moving in the $n$-dimensional space. Each agent is governed by the following equations:
\begin{align}
\dot{p}_i(t)&=v_i(t), \qquad \dot{v}_i(t)=u_{i}(t), \label{eq:dyn}\\
\dot{E}_i(t)&=-u_i^T(t)u_i(t)-\frac{\beta}{2}\sum_{i=1}^Na_{ij}\Vert v_i(t)-v_j(t)\Vert^2,\label{eq:enDyn}\\
p_i(0)&=p_i^0, \quad v_i(0)=v_i^0, \quad E_i(0)=E_i^0,\quad i=1,\dots,N, \nonumber
\end{align}
where $p_i(t)\in \mathbb{R}^n$, $v_i(t)\in \mathbb{R}^n$, $u_i(t)\in \mathbb{R}^n$, and $E_i(t) \in \mathbb{R}$ denote, respectively, the position, velocity, input, and energy level of agent $i$, and $p_i^0\in \mathbb{R}^n$, $v_i^0\in \mathbb{R}^n$, and $E_i^0\in \mathbb{R}$ are their initial values. Equation~\eqref{eq:dyn} describes the double-integrator dynamics of the agents, while Eq.~\eqref{eq:enDyn} delineates how the energy level of the agents changes. The first term of \eqref{eq:enDyn} represents the energy expenditure caused by the control input, while the second term represents the energy expenditure due to the resistance of velocity mismatch, where $\beta$ is a positive constant. Let
\begin{align}
  J_E^i(t)&=\int_0^t -\dot{E}_i(\tau)d\tau \nonumber
\end{align}
be the energy consumed by agent $i$ till time $t$. The energy cost of the multi-agent system is given by
\begin{align}\label{5}
J_E(t)&=\sum_{i=1}^N\int_0^t -\dot{E}_i(\tau)d\tau\nonumber\\
&=\int_0^t\{u^T(\tau)u(\tau)+\beta v^T(\tau)(\mathcal{L}\otimes I_n)v(\tau)\}d\tau,
\end{align}
where $u(\tau)=[u_1^T(\tau),\dots,u_N^T(\tau)]^T\in \mathbb{R}^{Nn}$ and $v(\tau)=[v_1^T(\tau),\dots,v_N^T(\tau)]^T\in\mathbb{R}^{Nn}$.
For notational convenience, $J_E(\infty)$ will be simplified as $J_E$ in the rest of the paper.

Define $x_i(t)=[p_i^T(t)\quad v_i^T(t)]^T\in \mathbb{R}^{2n}$. Equation~(\ref{eq:dyn}) can be written compactly as
\begin{align}\label{6}
\dot{x}_i(t)&=Ax_i(t)+Bu_i(t),
\end{align}
where
$A=\left[\begin{array}{cc}
   0 & 1 \\
   0 & 0 \\
\end{array}\right]\otimes I_n$ and
$
B=\left[\begin{array}{c}
   0 \\
   1 \\
\end{array}\right]\otimes I_n$.
Let $x^d=[(p^d)^T\quad (v^d)^T]^T\in \mathbb{R}^{2Nn}$ represents the desired state with $p^d=[(p_1^d)^T,\dots,(p_N^d)^T]^T\in \mathbb{R}^{Nn}$ and $v^d=[(v_1^d)^T,\dots,(v_N^d)^T]^T\in \mathbb{R}^{Nn}$ denoting, respectively, the desired position and velocity. To guarantee the tracking result, it is necessary that all agents have the same desired velocity. Particularly, for notational convenience, it is assumed that $v^d=\mathbf{0}$. Accordingly, the energy cost function (\ref{5}) can be rewritten in terms of $u(t)$ and $x(t)$ as
\begin{align}\label{7}
J_E=&\int_0^\infty\{u^T(t)u(t)+\beta[x(t)-x^d]^T(\mathcal{L}\otimes Q)\nonumber\\
&\times[x(t)-x^d]\}dt,
\end{align}
where $Q=\mathrm{diag}([0\quad1])\otimes I_n$, $x(t)=[x_1^T(t),\dots,x_N^T(t)]\in \mathbb{R}^{2Nn}$. Let $d_{ij}=[(p_{ij}^d)^T\quad (v_{ij}^d)^T]^T$ denote the prespecified relative state between agent $i$ and $j$, i.e., $p_{ij}^d=p_i^d-p_j^d$ and $v_{ij}^d=v_i^d-v_j^d=0$. Let $T$ be the termination time of the formation task, and $\varepsilon \in \mathbb{R}^+$ be the parameter of the steady-state error tolerance. The following problem is investigated in the paper.
\begin{problem}\label{prob1}
Design a distributed control input $u_i(t)$ for the system \eqref{6}, based on local information, such that for some $t_f \in \mathbb{R}^+$,
\begin{align}
& \quad \limsup_{t\rightarrow t_f}\Vert x_i(t)-x_j(t)-d_{ij}\Vert\leq\varepsilon \nonumber  \\
\mathrm{s. t.} & \quad t_f\leq T, \quad J_E^i(T)<E_i^0.
\end{align} \nonumber
\end{problem}
It is worth pointing out that $t_f\leq T$ and $J_E^i(T)<E_i^0$ are two ``hard'' constraints on the formation task. If $t_f>T$, the formation task fails to be achieved since it is not accomplished in a timely manner. On the other hand, $J_E^i(T) \geq E_i^0$ means that the energy is exhausted before the mission is completed.

\section{Distributed optimal energy-aware formation control}
\label{sect:contr}
This section is devoted to the development of an energy-aware distributed formation control algorithm by employing solely local information.
To this aim, define the performance measure
\begin{align}
J=J_E+J_x^f+J_x^{NA}, \nonumber
\end{align}
where the energy cost $J_E$ is defined in \eqref{7}, and
\begin{align*}
J_x^f&=\alpha\int_0^\infty [x(t)-x^d]^T(\mathcal{L}\otimes I_{2n})[x(t)-x^d]dt\\
&=\frac{\alpha}{2}\int_0^\infty\sum_{i=1}^Na_{ij}\|x_i(t)-x_j(t)-d_{ij}\|^2dt,\\
J_x^{NA}&=\alpha\int_0^\infty [x(t)-x^d]^T[M\otimes S][x(t)-x^d]dt.
\end{align*}
Here, $\alpha>0$ is a tradeoff parameter, 
$M=\alpha(\mathcal{L}^2-\sigma \mathcal{L})$ with $0<\sigma<\lambda_2$, and $S\geq0$ is a positive semi-definite matrix to be designed. The formation cost term $J_x^f$ represents the accumulated formation error, and ensures that the formation is reached asymptotically. It has been recognized that for a multi-agent system, the LQR-based optimal control law only exists under an all-to-all network topology \cite{cao2009optimal}. To circumvent the difficulty, the distribution cost term $J_x^{NA}$ is introduced to warrant that the optimal distributed control law exists for a generic connected network topology \cite{chen2019minimum}. The main results of this section are given as follows.
\begin{theorem}\label{the1}
Let
\begin{align}
P=\frac{1}{\sqrt{\sigma\alpha}}\left[
    \begin{array}{cc}
      \sqrt{\sigma\alpha+\beta\sigma+2\sqrt{\sigma\alpha}} & 1 \\
      1 & \sqrt{1+\frac{\beta}{\alpha}+\frac{2}{\sqrt{\sigma\alpha}}} \\
    \end{array}
  \right]\otimes I_n, \nonumber
\end{align}
where $0<\sigma<\lambda_2$ with $\lambda_2$ being the second smallest eigenvalue of the Laplacian matrix $\mathcal{L}$. If $M=\alpha(\mathcal{L}^2-\sigma \mathcal{L})$, $S=PBB^TP$, and Assumption \ref{assump1} holds, then
\begin{enumerate}
  \item the optimal distributed control input of (\ref{6}) that minimizes $J$ is given by
\begin{align}\label{11}
u_i^*(t)=-\alpha\sum_{j=1}^Na_{ij}B^TP[x_i^*(t)-x_j^*(t)-d_{ij}],
\end{align}
where $x_i^*(t)$ is the state under the optimal input $u_i^*(t)$ at time $t$;
  \item for given initial energy $E(0)=[E_1(0),\dots,E_N(0)]^T\in \mathbb{R}^N$, termination time $T>0$, and steady-state error tolerance $\varepsilon>0$, if the following inequalities hold
       \begin{numcases}{}
          T\geq \lambda_{\min}(P)\ln{\frac{V(x(0))}{\lambda_{\min}(P)(N-1)\varepsilon^2}},\label{12}\\
          E_i(0)\geq \frac{V_\mathcal{L}(0)}{2}\bigg[\lambda_N\bigg(\alpha+\frac{1}{\sigma}\bigg)\bigg(\alpha+\beta+2\sqrt{\frac{\alpha}{\sigma}}\bigg)\nonumber\\
          +\beta\bigg]\sqrt{1+\frac{\beta}{\alpha}+\frac{2}{\sqrt{\alpha\sigma}}}\bigg(1-e^{-\lambda_N\sqrt{\frac{\alpha}{\sigma}(1+\frac{\beta}{\alpha}+\frac{2}{\sqrt{\alpha\sigma}})}T}\bigg),\nonumber\\  \qquad i\in\{1,\dots,N\}, \label{13}
       \end{numcases}
where
\begin{align*}
&\lambda_{\min}(P)=\frac{1}{2\sqrt{\sigma\alpha}}\bigg(\bigg(1+\sqrt{\sigma\alpha}\bigg)\sqrt{1+\frac{\beta}{\alpha}+\frac{2}{\sqrt{\sigma\alpha}}}\\
&\quad-\sqrt{\bigg(1+\sigma\alpha-2\sqrt{\sigma\alpha}\bigg)\bigg(1+\frac{\beta}{\alpha}+\frac{2}{\sqrt{\sigma\alpha}}\bigg)+4}\bigg),\\
&V(x(0))=[x(0)-x^d]^T\left[\left(I_N-\frac{1}{N}\textbf{11}^T\right)\otimes P\right]\nonumber \\
& \hspace*{40pt} \times [x(0)-x^d],\\
&V_\mathcal{L}(0)=[x(0)-x^d]^T\left(\mathcal{L}\otimes I_{2n}\right)[x(0)-x^d],
\end{align*}
then Problem \ref{prob1} is solved under the distributed optimal control algorithm \eqref{11}.
\end{enumerate}
\end{theorem}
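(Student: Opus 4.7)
The plan is to handle Theorem~\ref{the1} in three stages: solve the associated LQR problem to obtain \eqref{11}, then run a modal Lyapunov argument for the time bound \eqref{12}, and finally bound the running cost along the closed-loop trajectory for the energy bound \eqref{13}.

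For part~(1), I introduce the tracking error $e(t)=x(t)-x^{d}$ and rewrite the composite cost as $J=\int_{0}^{\infty}\{u^{T}u+e^{T}Q_{\mathrm{eff}}\,e\}\,dt$ with $Q_{\mathrm{eff}}=\beta(\mathcal{L}\otimes Q)+\alpha(\mathcal{L}\otimes I_{2n})+\alpha(M\otimes S)$. Because $\mathcal{L}=\mathcal{W}^{T}\mathcal{J}\mathcal{W}$, the coordinate change $\tilde e=(\mathcal{W}\otimes I_{2n})e$, $\tilde u=(\mathcal{W}\otimes I_{n})u$ decouples both the plant and the cost into $N$ parallel LQR subproblems indexed by $\lambda_{i}$. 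I conjecture that the stabilizing solution of the $i$th subproblem has the form $\alpha\lambda_{i}P$ for a common $P$, so that the feedback gains $\alpha\lambda_{i}B^{T}P$ reassemble through $\mathcal{W}^{T}$ into the distributed gain $\alpha(\mathcal{L}\otimes B^{T}P)$. Substituting this ansatz into the mode-$i$ algebraic Riccati equation, and using $M=\alpha(\mathcal{L}^{2}-\sigma\mathcal{L})$ together with $S=PBB^{T}P$, the coefficient of $PBB^{T}P$ collapses to the $i$-independent value $-\alpha\sigma$ and the $N-1$ mode-Riccati equations reduce to the single equation
\begin{equation*}
A^{T}P+PA-\alpha\sigma\, PBB^{T}P+\tfrac{\beta}{\alpha}\,Q+I_{2n}=0.
\end{equation*}
A direct block-entry calculation confirms that the $P$ displayed in the statement is the unique positive-definite solution, so \eqref{11} is the concatenation of the mode-wise LQR feedbacks and is therefore globally optimal for $J$.

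For part~(2), I stay in the diagonal coordinates under the closed-loop dynamics $\dot{\tilde e}_{i}=(A-\alpha\lambda_{i}BB^{T}P)\tilde e_{i}$. The mode $i=1$ is the consensus direction and is annihilated by the projector $\Pi=I_{N}-\tfrac{1}{N}\mathbf{1}\mathbf{1}^{T}$ embedded in $V(x)$, so $V$ measures only the $N-1$ controlled modes. Using the reduced ARE, the time derivative along the flow becomes
\begin{equation*}
\dot V=-\sum_{i\geq 2}\tilde e_{i}^{T}\bigl[\alpha(2\lambda_{i}-\sigma)PBB^{T}P+\tfrac{\beta}{\alpha}Q+I_{2n}\bigr]\tilde e_{i},
\end{equation*}
which is strictly negative definite since $\lambda_{i}\geq\lambda_{2}>\sigma$. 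Comparing this against $V$ through the spectral bounds of $P$ yields exponential decay of $V$ at the rate $1/\lambda_{\min}(P)$ announced by the theorem, and inverting the logarithm produces \eqref{12}. The factor $(N-1)$ appears when the pairwise formation errors $x_{i}(t)-x_{j}(t)-d_{ij}$ are re-expressed as differences from the consensus mean and then bounded through $V$ summed over the $N-1$ non-trivial modes.

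For part~(3), I return to $J_{E}^{i}(T)=\int_{0}^{T}\{u_{i}^{T}u_{i}+\tfrac{\beta}{2}\sum_{j}a_{ij}\|v_{i}-v_{j}\|^{2}\}\,d\tau$, upper-bound each integrand by a scalar multiple of $V_{\mathcal{L}}(t)=e(t)^{T}(\mathcal{L}\otimes I_{2n})e(t)$ using $u=-\alpha(\mathcal{L}\otimes B^{T}P)e$ and the identity $\sum_{i,j}a_{ij}\|v_{i}-v_{j}\|^{2}=2\,e^{T}(\mathcal{L}\otimes Q)e$, and then derive a modal exponential estimate $V_{\mathcal{L}}(t)\leq V_{\mathcal{L}}(0)e^{-\gamma t}$ with decay rate $\gamma$ that matches the exponent displayed in \eqref{13}. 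Integration over $[0,T]$ produces exactly the stated energy bound. The main obstacle I anticipate is the algebraic cancellation in part~(1): choosing $M$ and $S$ in tandem so as to eliminate the $\lambda_{i}$-dependence is the crucial step that enables distributed LQR optimality, while the Lyapunov analyses in parts~(2)--(3) are conceptually routine but require care in keeping the spectral constants tight enough to reproduce the bounds exactly as stated.
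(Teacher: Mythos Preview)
Your plan for parts (1) and (2) is sound and close to the paper's argument. For part (1) the paper actually proceeds differently: it sets up a \emph{finite-horizon} version of $J$ with a time-varying weight $S(t)$ and a terminal cost, derives the optimal $\tilde u_i^{*}$ by a first-variation (calculus-of-variations) argument, obtains a parametric \emph{differential} Riccati equation for $P(t)$, and only then passes to the infinite-horizon limit to get the PARE you write down. Your route---decouple, guess $P_i=\alpha\lambda_i P$, and verify the PARE---is more direct and equally valid given standard LQR theory; it trades the paper's self-contained optimality derivation for brevity. Part (2) is essentially what the paper does: same Lyapunov function, same use of the PARE to evaluate $\dot V$, same spectral comparison with $\lambda_{\min}(P)$.

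Part (3), however, will not go through as you sketch it. The exponent $\lambda_N\sqrt{\tfrac{\alpha}{\sigma}(1+\tfrac{\beta}{\alpha}+\tfrac{2}{\sqrt{\alpha\sigma}})}$ in \eqref{13} is \emph{not} a decay rate for $V_{\mathcal L}(t)$; it is obtained in the paper by bounding $1-e^{2\lambda_{\max}(A-\mathrm{Max}(\lambda)\alpha BB^{T}P)T}$ from above via the crude lower bound $\lambda_{\max}\geq -\tfrac{1}{2}\mathrm{Max}(\lambda)\sqrt{\cdots}$ and then replacing $\mathrm{Max}(\lambda)$ by $\lambda_N$. That manoeuvre works only for the $(1-e^{-(\cdot)T})$ factor and would go the wrong way if you tried to use it as a global decay rate of $V_{\mathcal L}$. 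Likewise, the prefactor $\tfrac{1}{2}\sqrt{1+\tfrac{\beta}{\alpha}+\tfrac{2}{\sqrt{\alpha\sigma}}}$ does not arise as $C/\gamma$ from integrating an exponential; it comes from rationalising $1/|\lambda_{\max}(A-\mathrm{Max}(\lambda)\alpha BB^{T}P)|$ (multiplying numerator and denominator by the conjugate surd) and then bounding the resulting sum of radicals by $2\,\mathrm{Max}(\lambda)\sqrt{\tfrac{\alpha}{\sigma}(1+\cdots)}$. In short, the paper works mode-by-mode: it bounds $\int_0^T\|\tilde x_i^{*}(t)\|^{2}dt$ via the explicit state-transition matrix $e^{(A-\lambda_i\alpha BB^{T}P)t}$, combines this with $\lambda_i\|\tilde x_i(0)\|^{2}\leq V_{\mathcal L}(0)$, and then performs the algebraic rationalisation. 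Your $V_{\mathcal L}$-decay-and-integrate scheme would produce \emph{a} bound, but not the one stated in \eqref{13}; to match the theorem you need the modal state-transition estimate and the rationalisation step.
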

\begin{proof}
1) Define
\begin{align}
&\quad J(t_f,x(t_f))\nonumber\\
&=\int_0^{t_f}\{u^T(t)u(t)+\beta [x(t)-x^d]^T(\mathcal{L}\otimes Q)[x(t)-x^d]\}dt\nonumber\\
&\quad +\alpha \bigg\{\int_0^{t_f}[x(t)-x^d]^T(\mathcal{L}\otimes I_{2n})[x(t)-x^d]dt\nonumber\\
&\quad +\int_0^{t_f}[x(t)-x^d]^T[M\otimes S(t)][x(t)-x^d]dt\nonumber\\
&\quad +[x(t_f)-x^d]^T(\mathcal{L}\otimes I_{2n})[x(t_f)-x^d]\bigg\},\nonumber
\end{align}
where $S(t)\in \mathbb{R}^{2n\times 2n}$ is a time-varying positive semi-definite matrix, and $t_f$ is the actual convergence time defined in Problem \ref{prob1}. Let $\tilde{x}_i(t)$ and $\tilde{u}_i(t)$ denote, respectively, the $i$th component of $\tilde{x}(t)\triangleq(\mathcal{W} \otimes I_{2n})[x(t)-x^d]$ and $\tilde{u}(t)\triangleq(\mathcal{W} \otimes I_{2n})u(t)$, where $\mathcal{W}$ is defined in \eqref{1}. The multi-agent system (\ref{6}) can be written equivalently as
\begin{align}\label{15}
\dot{\tilde{x}}_i(t)=A\tilde{x}_i(t)+B\tilde{u}_i(t),\quad i=1,\dots,N,
\end{align}
where $Ax_i^d=0$, $i=1,\dots,N,$ is used. Due to Assumption \ref{assump1}, $J(t_f,x(t_f))$ can be written equivalently as $J(t_f,x(t_f))=\sum_{i=1}^NJ_i(t_f,\tilde{x}_i(t_f)),$

where
\begin{align}\label{16}
J_1(t_f,\tilde{x}_1(t_f))&=\int_0^{t_f}\tilde{u}_1^T(t)\tilde{u}_1(t)dt,\nonumber\\
J_i(t_f,\tilde{x}_i(t_f))&=\int_0^{t_f}\{\tilde{u}_i^T(t)\tilde{u}_i(t)+ \lambda_i\beta \tilde{x}_i^T(t)Q\tilde{x}_i(t)\}dt\nonumber\\
& +\alpha \bigg\{\int_0^{t_f}\tilde{x}_i^T(t)[\lambda_iI_{2n}+m_iS(t)]\tilde{x}_i(t)dt\nonumber\\
&+\lambda_i\tilde{x}_i^T(t_f)\tilde{x}_i(t_f)\bigg\},\quad i=2,\dots,N
\end{align}
with $m_i\triangleq \alpha(\lambda_i^2-\sigma\lambda_i)$. It is straightforward to obtain that $\tilde{u}_1^*\equiv0$.

Next, the optimal input $\tilde{u}_i^*$ is derived for $i=2,\dots,N$. Let $\tilde{x}_i^*(t)$ denote the state of (\ref{15}) under the optimal input $\tilde{u}_i^*(t)$, i.e.,
\begin{align}\label{17}
\dot{\tilde{x}}_i^*(t)=A\tilde{x}_i^*(t)+B\tilde{u}_i^*(t)
\end{align}
with the initial condition $\tilde{x}_i^*(0)=\tilde{x}_i^0$, where $\tilde{x}_i^0$ is the $i$th component of $\tilde{x}^0=(\mathcal{W}\otimes I_{2n})(x^0-x^d)$. Consider a new input vector
\begin{align}\label{18}
\tilde{u}_i(t)=\tilde{u}_i^*(t)+\epsilon \hat{u}_i(t)
\end{align}
for (\ref{15}), where $\hat{u}_i(t)$ is an arbitrary function of time, and $\epsilon\in \mathbb{R}$ is an arbitrary number. Due to the variation of the input vector, the state of the system (\ref{15}) will change from $\tilde{x}_i^*(t)$ to
\begin{align}\label{19}
\tilde{x}_i(t)=\tilde{x}_i^*(t)+\epsilon\hat{x}_i(t), \quad 0\leq t\leq t_f,
\end{align}
where $\hat{x}_i(t)$ is some function of time. Substitution of (\ref{18}) and (\ref{19}) into (\ref{15}) yields
\begin{align}\label{20}
\dot{\tilde{x}}_i^*(t)+\epsilon\dot{\hat{x}}_i(t)=&A[\tilde{x}_i^*(t)+\epsilon\hat{x}_i(t)]+B[\tilde{u}_i^*(t)+\epsilon\hat{u}_i(t)].
\end{align}
Substraction of \eqref{17} from \eqref{20} and cancelation of $\epsilon$ lead to
\begin{align}\label{21}
\dot{\hat{x}}_i(t)=A\hat{x}_i(t)+B\hat{u}_i(t)
\end{align}
with the initial condition $\hat{x}_i(0)=0$. The solution of (\ref{21}) is
\begin{align}\label{22}
\hat{x}_i(t)=\int_0^te^{A(t-\tau)}B\hat{u}_i(\tau)d\tau.
\end{align}
Using (\ref{18}) and (\ref{19}), Equation~(\ref{16}) can be rewritten as a function related to $\epsilon$, denoted by $J_i(t_f,\tilde{x}_i(t_f),\epsilon)$. Since $\tilde{u}_i^*(t)$ is the control input that minimizes $J_i(t_f,\tilde{x}_i(t_f),\epsilon)$, $J_i(t_f,\tilde{x}_i(t_f),\epsilon)$ must have a minimum at $\epsilon=0$, which implies that the first derivative of $J_i(t_f,\tilde{x}_i(t_f),\epsilon)$ with respect to $\epsilon$ should be zero at $\epsilon=0$. It thus follows that
\begin{align}\label{23}
&\int_0^{t_f}\{\hat{u}_i^T(t)\tilde{u}_i^*(t)+\hat{x}_i^T(t)[\lambda_i\beta Q+\lambda_i\alpha I_{2n}+m_i\alpha S(t)]\nonumber\\
&\times\tilde{x}_i^*(t)\}dt+\lambda_i\alpha\hat{x}_i^T(t_f)\tilde{x}_i^*(t_f)=0.
\end{align}
Substitution of (\ref{22}) into (\ref{23}) together with some rearrangements leads to
\begin{align}\label{24}
&\int_0^{t_f}\hat{u}_i^T(t)\bigg\{\tilde{u}_i^*(t)+B^T\int_t^{t_f}e^{A^T(\tau-t)}[\lambda_i\beta Q+\lambda_i\alpha I_{2n}\nonumber\\
&+m_i\alpha S(\tau)]\tilde{x}_i^*(\tau)d\tau+\lambda_i\alpha B^Te^{A^T(t_f-t)}\tilde{x}_i^*(t_f)\bigg\}dt=0.
\end{align}
Let
\begin{align}\label{25}
p_i(t)\triangleq&\int_t^{t_f}e^{A^T(\tau-t)}[\lambda_i\frac{\beta}{\alpha}Q+\lambda_i I_{2n}+m_i S(\tau)]\tilde{x}_i^*(\tau)d\tau\nonumber\\
&+\lambda_i e^{A^T(t_f-t)}\tilde{x}_i^*(t_f).
\end{align}
Equation~(\ref{24}) can be written compactly as
\begin{align}\label{26}
\int_0^{t_f}\hat{u}_i^T(t)\{\tilde{u}_i^*(t)+\alpha B^Tp_i(t)\}dt=0.
\end{align}
Since (\ref{26}) holds for all possible $\hat{u}_i(t)$, it follows that
\begin{align}\label{27}
\tilde{u}_i^*(t)=-\alpha B^Tp_i(t).
\end{align}
Therefore, the problem of finding the optimal input $\tilde{u}_i^*(t)$ is transformed into the problem of finding the solution of $p_i(t)$ that satisfies (\ref{25}). Similar to the process of obtaining $p_i(t)$ in \cite{chen2019minimum}, it can be shown that
\begin{align}\label{28}
p_i(t)=\lambda_iP(t)\tilde{x}_i^*(t),
\end{align}
where $P(t)$ is the solution to the following parametric differential Riccati equation (PDRE)
\begin{equation}\label{29}
\dot{P}(t)+ I_{2n}+\frac{\beta}{\alpha}Q+A^TP(t)+P(t)A-\sigma\alpha P(t)BB^TP(t)=0,
\end{equation}
\begin{align}
P(t_f)= I_{2n}, \nonumber
\end{align}
where $0<\sigma<\lambda_2$. Substitution of (\ref{28}) into (\ref{27}) yields
\begin{align*}
\tilde{u}_i^*(t)=-\alpha\lambda_iB^TP(t)\tilde{x}_i^*(t),
\end{align*}
or equivalently
\begin{align}
\tilde{u}^*(t)=-\alpha[\mathcal{J}\otimes B^TP(t)]\tilde{x}^*(t), \nonumber
\end{align}
which can be further written as
\begin{align}
u^*(t)=-\alpha[\mathcal{L}\otimes B^TP(t)][x^*(t)-x^d]. \nonumber
\end{align}\\
Let $P$ be the solution to the following parametric algebraic Riccati equation (PARE)
\begin{align}\label{33}
I_{2n}+\frac{\beta}{\alpha} Q+A^TP+PA-\sigma\alpha PBB^TP=0.
\end{align}
Since $(A,B)$ is stabilizable and $(A,I_{2n})$ is detectable, the solution to (\ref{29}) converges to that of (\ref{33}) as $t_f\rightarrow \infty$. This leads to the optimal control input in the infinite-horizon case,
\begin{align}\label{34}
\tilde{u}_i^*(t)=-\alpha\lambda_iB^TP\tilde{x}_i^*(t),
\end{align}
or equivalently
\begin{align}
u^*(t)=-\alpha(\mathcal{L}\otimes B^TP)[x^*(t)-x^d]. \nonumber
\end{align}
Substituting $A=\left[\begin{array}{cc}
   0 & 1 \\
   0 & 0 \\
\end{array}\right]\otimes I_n$ and
$
B=\left[\begin{array}{c}
   0 \\
   1 \\
\end{array}\right]\otimes I_n$ into \eqref{33}, the solution $P$ to \eqref{33} is given by
\begin{align}\label{36}
P=\frac{1}{\sqrt{\sigma\alpha}}\left[
    \begin{array}{cc}
      \sqrt{\sigma\alpha+\beta\sigma+2\sqrt{\sigma\alpha}} & 1 \\
      1 & \sqrt{1+\frac{\beta}{\alpha}+\frac{2}{\sqrt{\sigma\alpha}}} \\
    \end{array}
  \right]\otimes I_n.
\end{align}
The proof of the first part is thus completed.

2) Substituting the optimal control law (\ref{11}) into the system \eqref{6} yields the following closed-loop system:
\begin{align}\label{37}
\dot{x}^*(t)=(I_N\otimes A)x^*(t)-\alpha(\mathcal{L}\otimes BB^TP)[x^*(t)-x^d].
\end{align}
Define $V(x)=[x(t)-x^d]^T\left[(I_N-\frac{1}{N}\textbf{11}^T)\otimes P\right][x(t)-x^d],$
where $P$ is given by (\ref{36}). Note that $V(x)=0$ iff the formation is reached.
It follows from (\ref{37}) that
\begin{align}\label{38}
&\quad\dot{V}(x^*)\nonumber\\
&=2[x^*(t)-x^d]^T\bigg[\bigg(I_N-\frac{1}{N}\textbf{11}^T\bigg)\otimes PA\bigg][x^*(t)-x^d]\nonumber\\
&-2[x^*(t)-x^d]^T[\mathcal{L}\otimes \alpha PBB^TP][x^*(t)-x^d].
\end{align}
The first term in (\ref{38}) can be written as
\begin{align}\label{39}
&\quad 2[x^*(t)-x^d]^T\bigg[\bigg(I_N-\frac{1}{N}\textbf{11}^T\bigg)\otimes PA\bigg][x^*(t)-x^d]\nonumber\\
&=2[x^*(t)-x^d]^T(\mathcal{W}^T\Upsilon \mathcal{W}\otimes PA)[x^*(t)-x^d]\nonumber\\
&=2\sum_{i=2}^N[\tilde{x}_i^*(t)]^TPA\tilde{x}_i^*(t),
\end{align}
where $\Upsilon=\mathrm{diag}([0,1,\dots,1])\in \mathbb{R}^{N\times N}$. Similarly, the second term can be rewritten as
\begin{align}\label{40}
&\quad2[x^*(t)-x^d]^T[\mathcal{L}\otimes\alpha PBB^TP][x^*(t)-x^d]\nonumber\\
&=2\sum_{i=2}^N\alpha\lambda_i[\tilde{x}_i^*(t)]^TPBB^TP\tilde{x}_i^*(t).
\end{align}
Substituting (\ref{40}) and (\ref{39}) into (\ref{38}) yields
\begin{align}\label{41}
\dot{V}(x^*)&=\sum_{i=2}^N[\tilde{x}^*_i(t)]^T[(A-\lambda_i\alpha BB^TP)^TP \nonumber\\
&\quad +P(A-\lambda_i\alpha BB^TP)]\tilde{x}_i^*(t)\nonumber\\
&=-\sum_{i=2}^N[\tilde{x}^*_i(t)]^T\bigg\{I_{2n}+\frac{\beta}{\alpha} Q+\alpha[\sigma+2(\lambda_i-\sigma)] \nonumber\\
 &\quad\times PBB^TP\bigg\}\tilde{x}^*_i(t)\nonumber\\
&\leq-\sum_{i=2}^N[\tilde{x}_i^*(t)]^T\tilde{x}_i^*(t),
\end{align}
where the second equality is due to $(A-\lambda_i\alpha BB^TP)^TP+P(A-\lambda_i\alpha BB^TP)=A^TP-\sigma\alpha PBB^TP+PA-\sigma\alpha PBB^TP-2\alpha(\lambda_i-\sigma)PBB^TP
=-\bigg( I_{2n}+\frac{\beta}{\alpha} Q+\alpha[\sigma+2(\lambda_i-\sigma)]PBB^TP\bigg).$
Additionally,
\begin{align}\label{42}
V(x^*)&=[x^*(t)-x^d]^T\bigg[(I_N-\frac{1}{N}\textbf{11}^T)\otimes P\bigg][x^*(t)-x^d]\nonumber\\
&\geq\lambda_{\min}(P)\sum_{i=2}^N[\tilde{x}_i^*(t)]^T\tilde{x}_i^*(t).
\end{align}
It follows from \eqref{41} and \eqref{42} that $\frac{\dot{V}(x^*)}{V(x^*)}\leq-\frac{1}{\lambda_{\min}(P)},$
which gives
\begin{align}\label{44}
V(x^*(t))\leq e^{-\frac{1}{\lambda_{\min}(P)}t}V(x(0)).
\end{align}
Moreover, $V(x^*(t_f))=\sum_{i=2}^N[\tilde{x}_i^*(t_f)]^TP\tilde{x}_i^*(t_f)
\geq \lambda_{\min}(P)\sum_{i=2}^N\Vert \tilde{x}_i^*(t_f) \Vert^2
\geq \lambda_{\min}(P)(N-1)\varepsilon^2.$
By \eqref{44}, the upper bound on the formation time is given by
\begin{align}
t_f&\leq \lambda_{\min}(P)\ln{\frac{V(x(0))}{V(x^*(t_f))}}\nonumber\\
&\leq\lambda_{\min}(P)\ln{\frac{V(x(0))}{\lambda_{\min}(P)(N-1)\varepsilon^2}}.\nonumber
\end{align}
Therefore, for the given steady-state error tolerance $\varepsilon$ and the termination time $T$, the formation task can be achieved if
\begin{align}
T\geq \lambda_{\min}(P)\ln{\frac{V(x(0))}{\lambda_{\min}(P)(N-1)\varepsilon^2}},\nonumber
\end{align}
where by \eqref{36} 
\begin{align}
&\lambda_{\min}(P)=\frac{1}{2\sqrt{\sigma\alpha}}\bigg(\bigg(1+\sqrt{\sigma\alpha}\bigg)\sqrt{1+\frac{\beta}{\alpha}+\frac{2}{\sqrt{\sigma\alpha}}}\nonumber\\
&-\sqrt{\bigg(1+\sigma\alpha-2\sqrt{\sigma\alpha}\bigg)\bigg(1+\frac{\beta}{\alpha}+\frac{2}{\sqrt{\sigma\alpha}}\bigg)+4}\bigg).\nonumber
\end{align}

Let $J_{E^*}$ denote the energy consumption during $[0,T]$ under the optimal control law (\ref{11}). Due to Assumption \ref{assump1}, $J_{E^*}$ can be written as $J_{E^*}=\sum_{i=2}^NJ_{\tilde{E}_i^*},$
where $$J_{\tilde{E}_i^*}=\int_0^{T}\{[\tilde{u}_i^*(t)]^T\tilde{u}_i^*(t)+\beta\lambda_i[\tilde{x}_i^*(t)]^TQ\tilde{x}_i^*(t)\}dt.$$
It follows from (\ref{34}) that
\begin{align}\label{51}
J_{\tilde{E}_i^*}&=\int_0^{T}[\tilde{x}_i^*(t)]^T(\alpha^2\lambda_i^2PBB^TP+\beta\lambda_iQ)\tilde{x}_i^*(t)dt\nonumber\\
&\leq[\alpha^2\lambda_i\lambda_{\max}(PBB^TP)+\beta]\lambda_i\int_0^{T}[\tilde{x}_i^*(t)]^T\tilde{x}_i^*(t)dt\nonumber\\
&\leq\bigg[\lambda_N\bigg(\alpha+\frac{1}{\sigma}\bigg)\bigg(\alpha+\beta+2\sqrt{\frac{\alpha}{\sigma}}\bigg)+\beta\bigg]\lambda_i\nonumber\\
&\quad \times\int_0^{T}[\tilde{x}_i^*(t)]^T\tilde{x}_i^*(t)dt.
\end{align}
On the other hand, the solution of (\ref{17}) is given by
\begin{align}
\tilde{x}_i^*(t)=e^{(A-\lambda_i\alpha BB^TP)t}\tilde{x}_i(0). \nonumber
\end{align}
It hence follows that $\int_0^{T}[\tilde{x}_i^*(t)]^T\tilde{x}_i^*(t)dt=\int_0^{T}\tilde{x}_i^T(0) \times e^{(A-\lambda_i\alpha BB^TP)^Tt}e^{(A-\lambda_i\alpha BB^TP)t}\tilde{x}_i(0)dt\leq \Vert\tilde{x}_i(0)\Vert^2\int_0^{T} $ $\Vert e^{(A-\lambda_i\alpha BB^TP)t}\Vert^2dt. $
Since
\begin{align}
\Vert e^{(A-\lambda_i\alpha BB^TP)t}\Vert\leq e^{[\max\limits_{i=2,\dots,N}\lambda_{\max}(A-\lambda_i\alpha BB^TP)]t}, \nonumber
\end{align}
it follows that
\begin{align}\label{55}
& \quad\int_0^{T}[\tilde{x}_i^*(t)]^T\tilde{x}_i^*(t)dt\nonumber\\
&\leq \Vert\tilde{x}_i(0)\Vert^2\int_0^{T}e^{2[\max\limits_{i=2,\dots,N}\lambda_{\max}(A-\lambda_i\alpha BB^TP)]t}dt\nonumber\\
&= \frac{\Vert\tilde{x}_i(0)\Vert^2}{2\vert\max\limits_{i=2,\dots,N}\lambda_{\max}(A-\lambda_i\alpha BB^TP)\vert} \nonumber\\
&\quad \times\bigg(1-e^{2[\max\limits_{i=2,\dots,N}\lambda_{\max}(A-\lambda_i\alpha BB^TP)]T}\bigg).
\end{align}
It can be verified that
\begin{align}
&\quad \lambda_{\max}(A-\lambda_i\alpha BB^TP) \nonumber \\
&=\frac{1}{2}\bigg(-\lambda_i\sqrt{\frac{\alpha}{\sigma}\bigg(1+\frac{\beta}{\alpha}+\frac{2}{\sqrt{\sigma\alpha}}\bigg)}\nonumber\\
&\quad +\sqrt{\lambda_i^2\frac{\alpha}{\sigma}\bigg(1+\frac{\beta}{\alpha}+\frac{2}{\sqrt{\sigma\alpha}}\bigg)-4\lambda_i\sqrt{\frac{\alpha}{\sigma}}}\bigg). \nonumber
\end{align}
Additionally,
\begin{align}\label{57}
& \quad \lambda_i\Vert \tilde{x}_i(0)\Vert^2 \nonumber \\
&\leq\sum_{i=1}^N\lambda_i \Vert \tilde{x}_i(0)\Vert^2\nonumber\\
&\leq [x(0)-x^d]^T(\mathcal{L}\otimes I_{2n})[x(0)-x^d].
\end{align}
Combining (\ref{51}), (\ref{55}), and (\ref{57}) leads to
\begin{align}\label{58}
J_{\tilde{E}_i^*} \leq & \frac{V_\mathcal{L}(0)[\lambda_N(\alpha+\frac{1}{\sigma})(\alpha+\beta+2\sqrt{\frac{\alpha}{\sigma}})+\beta]}{2\vert\max\limits_{i=2,\dots,N}\lambda_{\max}(A-\lambda_i\alpha BB^TP)\vert}\nonumber\\
&\times\bigg(1-e^{2[\max\limits_{i=2,\dots,N}\lambda_{\max}(A-\lambda_i\alpha BB^TP)]T}\bigg),
\end{align}
where
\begin{align*}
V_\mathcal{L}(0)=[x(0)-x^d]^T(\mathcal{L}\otimes I_{2n})[x(0)-x^d].
\end{align*}
Let $\mathrm{Max}(\lambda)$ denote the  parameter $\lambda_i$ that maximizes $\lambda_{\max}(A-\lambda_i\alpha BB^TP)$. Eq.~\eqref{58} can be written as
\begin{align}\label{59}
J_{\tilde{E}_i^*} \leq & \frac{V_\mathcal{L}(0)[\lambda_N(\alpha+\frac{1}{\sigma})(\alpha+\beta+2\sqrt{\frac{\alpha}{\sigma}})+\beta]}{2\vert\lambda_{\max}(A-\mathrm{Max}(\lambda)\alpha BB^TP)\vert}\nonumber\\
&\times\bigg(1-e^{2\lambda_{\max}(A-\mathrm{Max}(\lambda)\alpha BB^TP)T}\bigg).
\end{align}
Since
\begin{align}
 & \quad \lambda_{\max}(A-\mathrm{Max}(\lambda)\alpha BB^TP)\nonumber \\
&=\frac{1}{2}\bigg(-\mathrm{Max}(\lambda)\sqrt{\frac{\alpha}{\sigma}\bigg(1+\frac{\beta}{\alpha}
+\frac{2}{\sqrt{\sigma\alpha}}\bigg)}\nonumber\\
&\quad+\sqrt{\mathrm{Max}^2(\lambda)\frac{\alpha}{\sigma}\bigg(1+\frac{\beta}{\alpha}+\frac{2}{\sqrt{\sigma\alpha}}\bigg)- 4\mathrm{Max}(\lambda)\sqrt{\frac{\alpha}{\sigma}}}\bigg)\nonumber\\
&\geq -\frac{1}{2}\mathrm{Max}(\lambda)\sqrt{\frac{\alpha}{\sigma}\bigg(1+\frac{\beta}{\alpha}+\frac{2}{\sqrt{\sigma\alpha}}\bigg)}, \nonumber
\end{align}
it follows that
\begin{align}\label{60}
&\quad1-e^{2\lambda_{\max}(A-\mathrm{Max}(\lambda)\alpha BB^TP)T}\nonumber\\
&\leq1-e^{-\mathrm{Max}(\lambda)\sqrt{\frac{\alpha}{\sigma}(1+\frac{\beta}{\alpha}+\frac{2}{\sqrt{\alpha\sigma}})}T}\nonumber\\
&\leq1-e^{-\lambda_N\sqrt{\frac{\alpha}{\sigma}(1+\frac{\beta}{\alpha}+\frac{2}{\sqrt{\alpha\sigma}})}T}.
\end{align}
Combining \eqref{59} and \eqref{60} leads to
\begin{align}\label{61}
J_{\tilde{E}_i^*}&\leq \frac{V_\mathcal{L}(0)[\lambda_N(\alpha+\frac{1}{\sigma})(\alpha+\beta+2\sqrt{\frac{\alpha}{\sigma}})+\beta]}{2\vert\lambda_{\max}(A-\mathrm{Max}(\lambda)\alpha BB^TP)\vert}\nonumber\\
&\quad\times\bigg(1-e^{-\lambda_N\sqrt{\frac{\alpha}{\sigma}(1+\frac{\beta}{\alpha}+\frac{2}{\sqrt{\alpha\sigma}})}T}\bigg)\nonumber\\
&=  \frac{V_\mathcal{L}(0)[\lambda_N(\alpha+\frac{1}{\sigma})(\alpha+\beta+2\sqrt{\frac{\alpha}{\sigma}})+\beta]}{4\mathrm{Max}(\lambda)\sqrt{\frac{\alpha}{\sigma}}}\nonumber\\
&\quad\times\bigg(\sqrt{\mathrm{Max}^2(\lambda)\frac{\alpha}{\sigma}(1+\frac{\beta}{\alpha}+\frac{2}{\sqrt{\alpha\sigma}})-4\mathrm{Max}(\lambda)\sqrt{\frac{\alpha}{\sigma}}}\nonumber\\
&\quad +\mathrm{Max}(\lambda)\sqrt{\frac{\alpha}{\sigma}(1+\frac{\beta}{\alpha}+\frac{2}{\sqrt{\alpha\sigma}})} \bigg)\nonumber\\
&\quad\times\bigg(1-e^{-\lambda_N\sqrt{\frac{\alpha}{\sigma}(1+\frac{\beta}{\alpha}+\frac{2}{\sqrt{\alpha\sigma}})}T}\bigg),
\end{align}
which holds by multiplying the numerator and denominator with $\sqrt{\mathrm{Max}^2(\lambda)\frac{\alpha}{\sigma}\bigg(1+\frac{\beta}{\alpha}+\frac{2}{\sqrt{\sigma\alpha}}\bigg)-4\mathrm{Max}(\lambda)\sqrt{\frac{\alpha}{\sigma}}}
+\mathrm{Max}(\lambda)\sqrt{\frac{\alpha}{\sigma}\bigg(1+\frac{\beta}{\alpha}+\frac{2}{\sqrt{\sigma\alpha}}\bigg)}.$
Additionally,
\begin{align}\label{62}
&\quad\sqrt{\mathrm{Max}^2(\lambda)\frac{\alpha}{\sigma}\bigg(1+\frac{\beta}{\alpha}+\frac{2}{\sqrt{\sigma\alpha}}\bigg)-4\mathrm{Max}(\lambda)\sqrt{\frac{\alpha}{\sigma}}}\nonumber\\
&\quad+\mathrm{Max}(\lambda)\sqrt{\frac{\alpha}{\sigma}\bigg(1+\frac{\beta}{\alpha}+\frac{2}{\sqrt{\sigma\alpha}}\bigg)}\nonumber\\
&\leq2\mathrm{Max}(\lambda)\sqrt{\frac{\alpha}{\sigma}\bigg(1+\frac{\beta}{\alpha}+\frac{2}{\sqrt{\sigma\alpha}}\bigg)}.
\end{align}
Substituting \eqref{62} into \eqref{61} yields
\begin{align*}
&\quad J_{\tilde{E}_i^*}\\
&\leq\frac{V_\mathcal{L}(0)[\lambda_N(\alpha+\frac{1}{\sigma})(\alpha+\beta+2\sqrt{\frac{\alpha}{\sigma}})+\beta]}{4\mathrm{Max}(\lambda)\sqrt{\frac{\alpha}{\sigma}}}2\mathrm{Max}(\lambda)\\
&\times\sqrt{\frac{\alpha}{\sigma}\bigg(1+\frac{\beta}{\alpha}+\frac{2}{\sqrt{\sigma\alpha}}\bigg)}\bigg(1-e^{-\lambda_N\sqrt{\frac{\alpha}{\sigma}(1+\frac{\beta}{\alpha}+\frac{2}{\sqrt{\alpha\sigma}})}T}\bigg)\\
&=\frac{1}{2}V_\mathcal{L}(0) \bigg[\lambda_N\bigg(\alpha+\frac{1}{\sigma}\bigg)\bigg(\alpha+\beta+2\sqrt{\frac{\alpha}{\sigma}}\bigg)+\beta\bigg]\\
&\times\sqrt{1+\frac{\beta}{\alpha}+\frac{2}{\sqrt{\alpha\sigma}}}\bigg(1-e^{-\lambda_N\sqrt{\frac{\alpha}{\sigma}(1+\frac{\beta}{\alpha}+\frac{2}{\sqrt{\alpha\sigma}})}T}\bigg).
\end{align*}
The energy constraint is given by $J_{\tilde{E}_i^*}\leq E_i(0).$
Thus, the energy requirement can be met if
\begin{align}
E_i(0)&\geq \frac{1}{2}V_\mathcal{L}(0) \bigg[\lambda_N\bigg(\alpha+\frac{1}{\sigma}\bigg)\bigg(\alpha+\beta+2\sqrt{\frac{\alpha}{\sigma}}\bigg)+\beta\bigg] \nonumber\\ &\times\sqrt{1+\frac{\beta}{\alpha}+\frac{2}{\sqrt{\alpha\sigma}}}\bigg(1-e^{-\lambda_N\sqrt{\frac{\alpha}{\sigma}(1+\frac{\beta}{\alpha}+\frac{2}{\sqrt{\alpha\sigma}})}T}\bigg),\nonumber\\
& \quad i\in\{1,\dots,N\}. \nonumber
\end{align}
The proof is thus completed.
\end{proof}

According to Theorem~\ref{the1}, if the time constraint is removed, i.e., $T\rightarrow\infty$, the energy bound can be simplified as $E_i(0)\geq \frac{1}{2}V_\mathcal{L}(0) \bigg[\lambda_N\bigg(\alpha+\frac{1}{\sigma}\bigg)\bigg(\alpha+\beta+2\sqrt{\frac{\alpha}{\sigma}}\bigg)+\beta\bigg]
\sqrt{1+\frac{\beta}{\alpha}+\frac{2}{\sqrt{\alpha\sigma}}},\quad i\in\{1,\dots,N\}.$
Additionally, it is noted that if the initial formation error is large, a longer termination time $T$ and a higher energy level $E_i(0)$ are expected for achieving the formation of the multi-agent system. Besides, the smaller the formation threshold $\varepsilon$, the longer the termination time and the more the energy consumption.

\section{Monotonicity properties of the optimal formation algorithm}

\label{sec:IV}

This section is devoted to the discussion of the relationships between the lower bound of the required initial energy $E_i(0)$, the lower bound of the achievable termination time $T$, and the algorithm parameters.

\subsection{Monotonicity of the PARE solution}
The following result presents the monotonicity of the solution $P$ of the PARE \eqref{33} with respect to the parameters $\alpha$, $\sigma$, and $\beta$.
\begin{theorem}\label{the2}
The solution $P$ of the PARE \eqref{33} is a decreasing function of $\alpha$ and $\sigma$, and an increasing function of $\beta$, i.e.,
\begin{align}
\frac{\partial P}{\partial \alpha}\leq 0,\quad \frac{\partial P}{\partial \sigma}\leq 0,\quad \frac{\partial P}{\partial \beta}\geq 0, \quad \forall {\alpha, \sigma, \beta}>0. \nonumber
\end{align}
\end{theorem}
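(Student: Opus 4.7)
The plan is to proceed by implicit differentiation of the PARE \eqref{33}, turning each monotonicity statement into a Lyapunov equation whose right-hand side has a definite sign. The key enabling fact is that the closed-loop matrix $A_{cl}\triangleq A-\sigma\alpha BB^{T}P$ is Hurwitz for every admissible triple $(\alpha,\sigma,\beta)$. This is exactly the stabilizing-solution property of the LQR-type Riccati equation \eqref{33}: with $(A,B)$ stabilizable and the state weight $I_{2n}+(\beta/\alpha)Q\succ 0$ (hence $(A,(I_{2n}+(\beta/\alpha)Q)^{1/2})$ detectable), the unique PSD solution $P$ yields $A-\sigma\alpha BB^{T}P$ Hurwitz. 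This fact was already tacitly used in the proof of Theorem~\ref{the1} when passing from the PDRE \eqref{29} to the PARE \eqref{33}.

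Granted the Hurwitz property of $A_{cl}$, I would differentiate \eqref{33} termwise with respect to each parameter and rearrange. For $\beta$, this yields
\begin{align*}
A_{cl}^{T}\frac{\partial P}{\partial\beta}+\frac{\partial P}{\partial\beta}A_{cl}=-\frac{1}{\alpha}Q,
\end{align*}
whose right-hand side is $\preceq 0$; by the standard Lyapunov integral representation with Hurwitz $A_{cl}$, $\partial P/\partial\beta\succeq 0$. For $\alpha$, I would collect terms into
\begin{align*}
A_{cl}^{T}\frac{\partial P}{\partial\alpha}+\frac{\partial P}{\partial\alpha}A_{cl}=\frac{\beta}{\alpha^{2}}Q+\sigma PBB^{T}P,
\end{align*}
whose right-hand side is $\succeq 0$, so $\partial P/\partial\alpha\preceq 0$. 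For $\sigma$, the analogous computation produces
\begin{align*}
A_{cl}^{T}\frac{\partial P}{\partial\sigma}+\frac{\partial P}{\partial\sigma}A_{cl}=\alpha PBB^{T}P\succeq 0,
\end{align*}
so $\partial P/\partial\sigma\preceq 0$. Each of the three inequalities in the theorem then follows from a single application of the Lyapunov inequality.

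The main obstacle is really the uniform Hurwitz property of $A_{cl}$ as $(\alpha,\sigma,\beta)$ varies; everything downstream is a one-line Lyapunov argument. A purely computational alternative is to exploit the explicit $2\times 2$ block form \eqref{36}, reduce the claim to monotonicity of the scalar entries $p_{11}$, $p_{12}$, $p_{22}$ in each parameter (and check nonpositivity/nonnegativity of $\partial P/\partial(\cdot)$ through its trace and determinant), but this route is much more tedious and offers less insight, because verifying the signs entrywise still implicitly uses the same structural information about $A_{cl}$ that the Lyapunov argument captures directly.
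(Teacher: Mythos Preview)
Your proposal is correct and follows essentially the same route as the paper: implicitly differentiate the PARE \eqref{33} with respect to each parameter, collect terms into a Lyapunov equation in $A_{cl}=A-\sigma\alpha BB^{T}P$, and read off the sign of $\partial P/\partial(\cdot)$ from the sign of the right-hand side via the Lyapunov integral representation. The only cosmetic difference is that the paper establishes the Hurwitz property of $A_{cl}$ directly by rewriting \eqref{33} as $A_{cl}^{T}P+PA_{cl}=-\bigl(I_{2n}+\tfrac{\beta}{\alpha}Q+\sigma\alpha PBB^{T}P\bigr)$ with a positive-definite right-hand side, whereas you invoke the standard stabilizing-solution property of the LQR Riccati equation; both justifications are valid and lead to the same three Lyapunov equations you wrote down.
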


\begin{proof}
It follows from \eqref{33} that
\begin{align}\label{64}
&\quad(A-\alpha\sigma BB^TP)^TP+P(A-\alpha\sigma BB^TP)\nonumber\\
&=-\bigg(I_{2n}+\frac{\beta}{\alpha}Q+\alpha\sigma PBB^TP\bigg).
\end{align}
Since $I_{2n}+\frac{\beta}{\alpha}Q+\alpha\sigma PBB^TP$ is positive definite, it follows from \eqref{64} that $(A-\alpha\sigma BB^TP)$ is Hurwitz. To show the relationship between $P$ and $\alpha$, differentiating both sides of \eqref{64} with respect to $\alpha$ yields
\begin{align}\label{65}
&\quad \frac{\partial P}{\partial\alpha}(A-\alpha\sigma BB^TP)+(A-\alpha\sigma BB^TP)^T\frac{\partial P}{\partial \alpha}\nonumber\\
&=\sigma PBB^TP+\frac{\beta}{\alpha^2}Q.
\end{align}
Since $(A-\alpha\sigma BB^TP)$ is Hurwitz, and the right-hand side of \eqref{65} is positive semidefinite, \eqref{65} has the following unique solution
\begin{align}
\frac{\partial P}{\partial \alpha}&=-\int_0^\infty e^{(A-\alpha\sigma BB^TP)^Tt}\bigg(\sigma PBB^TP+\frac{\beta}{\alpha^2}Q\bigg)\nonumber\\
&\quad \times e^{(A-\alpha\sigma BB^TP)t}dt\nonumber\\
&\leq 0. \nonumber
\end{align}
Thus, $P$ is monotonically decreasing with $\alpha$.
Similarly, it can be shown that
\begin{align}
\frac{\partial P}{\partial\sigma}(A-\alpha\sigma BB^TP)+(A-\alpha\sigma BB^TP)^T\frac{\partial P}{\partial \sigma}=\alpha PBB^TP, \nonumber
\end{align}
which has the following unique solution
\begin{align}
\frac{\partial P}{\partial \sigma}&=-\int_0^\infty e^{(A-\alpha\sigma BB^TP)^Tt}\alpha PBB^TPe^{(A-\alpha\sigma BB^TP)t}dt\nonumber\\
&\leq 0. \nonumber
\end{align}
Similarly, it can be shown that
\begin{align}
\frac{\partial P}{\partial\beta}(A-\alpha\sigma BB^TP)+(A-\alpha\sigma BB^TP)^T\frac{\partial P}{\partial \beta}=-\frac{1}{\alpha}Q, \nonumber
\end{align}
which has the unique solution
\begin{align}
\frac{\partial P}{\partial \beta}&=\int_0^\infty e^{(A-\alpha\sigma BB^TP)^Tt}\frac{1}{\alpha}Qe^{(A-\alpha\sigma BB^TP)t}dt\geq 0. \nonumber
\end{align}
Thus, $P$ is monotonically decreasing with $\sigma$ and monotonically increasing with $\beta$. The proof is thus completed.
\end{proof}

\subsection{Termination time}
\label{sect:time}
The following result discusses the monotonicity of the lower bound of the termination time $T$ in \eqref{12}. 

\begin{theorem}\label{the3}
The lower bound of the achievable termination time $T$ in (\ref{12}) is a decreasing function of both $\sigma$ and $\alpha$ and an increasing function of $\beta$.
\end{theorem}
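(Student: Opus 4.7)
The plan is to establish monotonicity of
\begin{align*}
T_{\min}(\alpha,\sigma,\beta) = \lambda_{\min}(P)\ln\frac{V(x(0))}{\lambda_{\min}(P)(N-1)\varepsilon^2}
\end{align*}
in each parameter by a chain-rule argument that rests on Theorem~\ref{the2}. First I would note that $V(x(0))$ and $\lambda_{\min}(P)$ each inherit the monotonicities of $P$: writing $V(x(0))=\sum_{i=1}^{N}(y_i-\bar y)^T P(y_i-\bar y)$ with $y_i=x_i(0)-x_i^d$ and $\bar y=\frac{1}{N}\sum_j y_j$ exhibits $V(x(0))$ as a linear functional of $P$, so Theorem~\ref{the2} immediately yields $\partial V(x(0))/\partial\alpha\le 0$, $\partial V(x(0))/\partial\sigma\le 0$, and $\partial V(x(0))/\partial\beta\ge 0$. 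The Rayleigh characterization $\lambda_{\min}(P)=\min_{\|w\|=1}w^T P w$ combined with the envelope theorem similarly yields $\partial\lambda_{\min}(P)/\partial\alpha = w_\star^T(\partial P/\partial\alpha)w_\star\le 0$, together with the analogous signs for $\sigma$ and $\beta$.

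Next I would differentiate directly. For $\alpha$ the chain rule gives
\begin{align*}
\frac{\partial T_{\min}}{\partial\alpha} = \frac{\partial\lambda_{\min}(P)}{\partial\alpha}\bigg[\ln\frac{V(x(0))}{\lambda_{\min}(P)(N-1)\varepsilon^2}-1\bigg] + \frac{\lambda_{\min}(P)}{V(x(0))}\frac{\partial V(x(0))}{\partial\alpha}.
\end{align*}
In the nontrivial operating regime where the initial formation error is sufficiently above the tolerance (so the log exceeds $1$), the bracket is nonnegative and multiplies a nonpositive factor, while the second term is manifestly nonpositive; hence $\partial T_{\min}/\partial\alpha\le 0$. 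The argument for $\sigma$ is identical in structure since Theorem~\ref{the2} provides the same sign for $\partial P/\partial\sigma$. For $\beta$, the three relevant derivatives in Theorem~\ref{the2} reverse sign, so both summands of the analogous chain-rule expression become nonnegative and $\partial T_{\min}/\partial\beta\ge 0$.

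The main obstacle I foresee is the borderline regime $V(x(0))/(\lambda_{\min}(P)(N-1)\varepsilon^2)\in[1,e)$, in which the bracket factor changes sign and the decomposition above no longer delivers a clean conclusion. To settle this case one should either invoke an implicit nontriviality assumption on the initial error (the formation task is not essentially already solved at $t=0$ up to the tolerance), or substitute the explicit formula~\eqref{36} for $P$ together with the closed-form expression for $\lambda_{\min}(P)$ given in Theorem~\ref{the1} and verify the sign by a direct but tedious algebraic computation in the variables $\sqrt{\sigma\alpha}$ and $\beta/\alpha$.
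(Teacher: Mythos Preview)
Your proposal is correct and follows essentially the same chain-rule approach as the paper, arriving at the identical derivative decomposition and invoking Theorem~\ref{the2} in the same way. The paper handles the bracket-sign issue you flagged via the bound $V(x(0))\ge\lambda_{\min}(P)\sum_{i=2}^N\|\tilde x_i(0)\|^2$ together with the implicit nontriviality assumption $\sum_{i=2}^N\|\tilde x_i(0)\|^2\ge e(N-1)\varepsilon^2$, exactly as you anticipated; your envelope-theorem argument for $\partial\lambda_{\min}(P)/\partial\alpha\le 0$ is a slightly cleaner variant of the paper's diagonalization step.
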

\begin{proof}
For notational convenience, define the lower bound of the termination time $T$ as $T_l$, i.e.,
\begin{align}\label{71}
T_l=\lambda_{\min}(P)\ln{\frac{V(x(0))}{\lambda_{\min}(P)(N-1)\varepsilon^2}}.
\end{align}
Differentiating both sides of \eqref{71} with respect to $\alpha$ yields
\begin{align}\label{72}
\frac{\partial {T}_l}{\partial \alpha}=&\frac{\partial\lambda_{\min}(P)}{\partial\alpha}\bigg(\ln{\frac{V(x(0))}{\lambda_{\min}(P)(N-1)\varepsilon^2}}-1\bigg)\nonumber\\
&+\frac{\lambda_{\min}(P)}{V(x(0))}\frac{\partial V(x(0))}{\partial \alpha}.
\end{align}
It is straightforward to know that $\frac{\partial T_l}{\partial \alpha}\leq 0$ if the two terms on the right-hand side of \eqref{72} are non-positive. According to the relationship of $P$ and $\alpha$, it follows that
\begin{align}
\frac{\partial V(x(0))}{\partial \alpha}\leq 0. \nonumber
\end{align}
Since $P$ is symmetric and positive semidefinite, it can be diagonalized as
\begin{align}\label{74}
P=\mathcal{M}^T\Lambda(P)\mathcal{M},
\end{align}
where $\mathcal{M}=[m_1,\dots,m_{2n}]$ is the matrix comprising the orthonormal eigenvectors of $P$ and $\Lambda(P)=\mathrm{diag}([\lambda_1(P),\dots,\lambda_{2n}(P)])$ with $\lambda_i(P)$ being the $i$th eigenvalue of $P$.\\
Differentiating both sides of \eqref{74} with respect to $\alpha$ yields
\begin{align}
\frac{\partial P}{\partial \alpha}&=\frac{\partial \mathcal{M}^T}{\partial \alpha}(\Lambda(P)\mathcal{M})+\mathcal{M}^T\frac{\partial (\Lambda(P)\mathcal{M})}{\partial \alpha}\nonumber\\
&=0+\mathcal{M}^T(\frac{\partial \Lambda(P)}{\partial \alpha}\mathcal{M}+\Lambda(P)\frac{\partial \mathcal{M}}{\partial \alpha})\nonumber\\
&=\mathcal{M}^T\frac{\partial \Lambda(P)}{\partial \alpha}\mathcal{M}. \nonumber
\end{align}
Since $\frac{\partial P}{\partial \alpha}\leq 0$, each eigenvalue of $\frac{\partial P}{\partial \alpha}$ must be non-positive, i.e.,
\begin{align}
\frac{\partial \lambda_i(P)}{\partial \alpha}\leq 0,\quad i\in\{1,\dots,2n\}, \nonumber
\end{align}
which gives $\frac{\partial\lambda_{\min}(P)}{\partial \alpha}\leq 0.$
Additionally,
\begin{align*}
&\quad\ln{\frac{V(x(0))}{\lambda_{\min}(P)(N-1)\varepsilon^2}}-1 \\ &\geq \ln{\frac{\lambda_{\min}(P)\sum_{i=2}^N\Vert\tilde{x}_i(0)\Vert^2}{\lambda_{\min}(P)(N-1)\varepsilon^2}}-1\\
&=\ln{\frac{\sum_{i=2}^N\Vert\tilde{x}_i(0)\Vert^2}{(N-1)\varepsilon^2}}-1\geq 0,
\end{align*}
which leads to $\frac{\partial {T}_l}{\partial \alpha}\leq 0.$
Similarly, it can be shown that
\begin{align*}
\frac{\partial {T}_l}{\partial \sigma}\leq 0, \quad \frac{\partial {T}_l}{\partial \beta}\geq 0.
\end{align*}
The proof is thus completed.
\end{proof}

\subsection{Energy expenditure}
\label{sect:ener}
Next, the effect of the parameters $\alpha$, $\sigma$, and $\beta$ on the lower bound of the energy level $E_i(0)$ in \eqref{13} is investigated. The following assumption is made in this subsection.
\begin{assumption}\label{assump2}
Suppose that
\begin{align*}
\lambda_N \left(\frac{3}{2}\alpha+\frac{1}{2}\beta+2\sqrt{\frac{\alpha}{\sigma}}+\frac{1}{2\sigma}-\frac{\beta}{2\alpha\sigma}\right)-\frac{\beta}{2\alpha}\geq 0,
\end{align*}
where $\lambda_N$ denotes the largest eigenvalue of the Laplacian matrix.
\end{assumption}

\begin{theorem}\label{the4}
If Assumption \ref{assump2} holds, then the lower bound of the required initial energy $E_i(0)$ in \eqref{13} is an increasing function of $\alpha$ and $\beta$ and a decreasing function of $\sigma$.
\end{theorem}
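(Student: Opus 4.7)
The plan is to write the lower bound of $E_i(0)$ in (\ref{13}), call it $E_l$, as the product $E_l = \frac{V_\mathcal{L}(0)}{2}\, f_1 f_2 f_3$, where $f_1 = \lambda_N(\alpha+1/\sigma)(\alpha+\beta+2\sqrt{\alpha/\sigma})+\beta$, $f_2 = \sqrt{1+\beta/\alpha+2/\sqrt{\alpha\sigma}}$, and $f_3 = 1-e^{-\lambda_N T\sqrt{(\alpha+\beta+2\sqrt{\alpha/\sigma})/\sigma}}$, each strictly positive for $\alpha,\beta,\sigma,T>0$. Since $V_\mathcal{L}(0)\geq 0$ does not depend on the three parameters, monotonicity of $E_l$ in each parameter will follow either from a direct product argument (when all three $f_i$ vary in the same direction) or, when one factor opposes the others, from an explicit regrouping aided by Assumption~\ref{assump2}.

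The $\beta$- and $\sigma$-cases are by inspection. For $\beta$, every $\beta$-term appearing in $f_1$, in the radicand of $f_2$, and inside the exponent of $f_3$ carries a non-negative coefficient, so $\partial f_i/\partial \beta \geq 0$ for $i=1,2,3$, and hence $\partial E_l/\partial \beta\geq 0$. For $\sigma$, every $\sigma$-dependence enters through $1/\sigma$, $1/\sqrt{\alpha\sigma}$, or $\sqrt{\alpha/\sigma}$, each non-increasing in $\sigma$, giving $\partial f_i/\partial\sigma \leq 0$ and hence $\partial E_l/\partial\sigma \leq 0$. Neither of these two cases requires Assumption~\ref{assump2}.

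The $\alpha$-case is the main obstacle. Clearly $f_1$ is non-decreasing in $\alpha$ (every term of $f_1$ grows with $\alpha$) and $f_3$ is non-decreasing (the radicand $(\alpha+\beta+2\sqrt{\alpha/\sigma})/\sigma$ in its exponent grows with $\alpha$, so $1-e^{-(\cdot)T}$ grows too), but $f_2^2 = 1+\beta/\alpha+2/\sqrt{\alpha\sigma}$ is strictly decreasing, so $f_2$ pulls the opposite way. The strategy is to regroup $E_l = \frac{V_\mathcal{L}(0)}{2}(f_1 f_2)\, f_3$ and to prove, under Assumption~\ref{assump2}, that $f_1 f_2$ is non-decreasing in $\alpha$. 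Once this is established, $f_3\geq 0$ and $\partial f_3/\partial\alpha\geq 0$ give $\partial E_l/\partial\alpha\geq 0$ via the product rule.

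To verify $(f_1 f_2)'\geq 0$, one starts from $2 f_2 (f_1 f_2)' = 2 f_1' f_2^2 + f_1 (f_2^2)'$ and uses $(f_2^2)' = -(\beta+\sqrt{\alpha/\sigma})/\alpha^2$. Multiplying through by $\alpha^2>0$ reduces the sign condition to $2\alpha f_1'\,(\alpha+\beta+2\sqrt{\alpha/\sigma}) \geq f_1\,(\beta+\sqrt{\alpha/\sigma})$. Expanding $f_1' = \lambda_N[2\alpha+\beta+3\sqrt{\alpha/\sigma}+1/\sigma+1/(\sigma\sqrt{\alpha\sigma})]$, collecting like powers of $\alpha$, $\sqrt{\alpha/\sigma}$, and $1/\sigma$, and dividing out common positive factors, the remaining inequality is implied by $\lambda_N(3\alpha/2+\beta/2+2\sqrt{\alpha/\sigma}+1/(2\sigma)-\beta/(2\alpha\sigma))-\beta/(2\alpha)\geq 0$, which is exactly Assumption~\ref{assump2}. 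The hard part is the bookkeeping in this reduction: the cross term $(\alpha+1/\sigma)(\alpha+\beta+2\sqrt{\alpha/\sigma})$ in $f_1$ produces several mixed terms of orders $\sqrt{\alpha/\sigma}$ and $1/(\sigma\sqrt{\alpha\sigma})$ whose signs must be tracked carefully, and Assumption~\ref{assump2} is engineered to dominate the one residual term (coming from the negative contribution $-\beta/(2\alpha)$) that could otherwise spoil the inequality.
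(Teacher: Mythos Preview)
Your proposal is correct and follows essentially the same route as the paper. Your three-factor split $f_1 f_2 f_3$ is a cosmetic refinement of the paper's two-factor split $H_1 H_2$ (with $H_1=f_1 f_2$ and $H_2=f_3/2$); for the $\alpha$-case you regroup to $(f_1 f_2)f_3$ and reduce the sign of $(f_1 f_2)'$ to Assumption~\ref{assump2} via the same algebra, while for $\beta$ and $\sigma$ you argue by inspection where the paper writes out the partial derivatives explicitly.
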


\begin{proof}
For notational convenience, define the lower bound of $E_i(0)$ as $E_{i_l}$, i.e.,
\begin{align}\label{80}
&E_{i_l}=\frac{1}{2}V_\mathcal{L}(0)\bigg[\lambda_N\bigg(\alpha+\frac{1}{\sigma}\bigg)\bigg(\alpha+\beta+2\sqrt{\frac{\alpha}{\sigma}}\bigg)+\beta\bigg]\nonumber\\
&\times\sqrt{1+\frac{\beta}{\alpha}+\frac{2}{\sqrt{\alpha\sigma}}}\bigg(1-e^{-\lambda_N\sqrt{\frac{\alpha}{\sigma}(1+\frac{\beta}{\alpha}+\frac{2}{\sqrt{\alpha\sigma}})}T}\bigg).
\end{align}
Differentiating both sides of \eqref{80} with respect to $\alpha$ yields
\begin{align}\label{81}
\frac{\partial {E_{i_l}}}{\partial \alpha}= \frac{\partial H_1}{\partial \alpha}V_\mathcal{L}(0)H_2+\frac{\partial H_2}{\partial \alpha}V_\mathcal{L}(0)H_1,
\end{align}
where $H_1=\bigg[\lambda_N\bigg(\alpha+\frac{1}{\sigma}\bigg)\bigg(\alpha+\beta+2\sqrt{\frac{\alpha}{\sigma}}\bigg)+\beta\bigg] \sqrt{1+\frac{\beta}{\alpha}+\frac{2}{\sqrt{\alpha\sigma}}},$ $H_2=\frac{1}{2}\bigg(1-e^{-\lambda_N\sqrt{\frac{\alpha}{\sigma}(1+\frac{\beta}{\alpha}+\frac{2}{\sqrt{\alpha\sigma}})}T}\bigg)$,
$\frac{\partial H_1}{\partial \alpha}=\lambda_N\bigg(2\alpha+\beta+3\sqrt{\frac{\alpha}{\sigma}}+\frac{1}{\sigma}+\frac{1}{\sigma\sqrt{\sigma\alpha}}\bigg) 
 \sqrt{1+\frac{\beta}{\alpha}+\frac{2}{\sqrt{\alpha\sigma}}}-\bigg[\lambda_N\bigg(\alpha+\frac{1}{\sigma}\bigg)\bigg(\alpha+\beta
+2\sqrt{\frac{\alpha}{\sigma}}\bigg)+\beta\bigg] \frac{\frac{\beta}{\alpha^2}+\frac{1}{\alpha\sqrt{\alpha\sigma}}}{2\sqrt{1+\frac{\beta}{\alpha}+\frac{2}{\sqrt{\alpha\sigma}}}},$ and $\frac{\partial H_2}{\partial \alpha}=\lambda_NT\frac{\frac{1}{\sigma}(1+\frac{1}{\sqrt{\sigma\alpha}})}{4\sqrt{\frac{\alpha}{\sigma}(1+\frac{\beta}{\alpha}+\frac{2}{\sqrt{\alpha\sigma}})}}e^{-\lambda_N\sqrt{\frac{\alpha}{\sigma}(1+\frac{\beta}{\alpha}+\frac{2}{\sqrt{\alpha\sigma}})}T}$.
It can be seen that $\frac{\partial {E_{i_l}}}{\partial \alpha}\geq 0$ when the two terms on the right-hand side of (\ref{81}) are non-negative. Since $V_\mathcal{L}(0)\geq 0,\quad H_1\geq 0,\quad H_2\geq 0$, and  $\frac{\partial H_2}{\partial \alpha}\geq 0,$
the second term of (\ref{81}) is non-negative. In the following, the sign of $\frac{\partial H_1}{\partial \alpha}$ is discussed. It follows that
\begin{align}
\frac{\partial H_1}{\partial \alpha}&\geq\lambda_N\bigg(2\alpha+\beta+3\sqrt{\frac{\alpha}{\sigma}}+\frac{1}{\sigma}+\frac{1}{\sigma\sqrt{\sigma\alpha}}\bigg)\nonumber\\
&\times\sqrt{1+\frac{\beta}{\alpha}+\frac{2}{\sqrt{\alpha\sigma}}}-\bigg[\lambda_N\bigg(\alpha+\frac{1}{\sigma}\bigg)\bigg(\alpha+\beta\nonumber\\
&+2\sqrt{\frac{\alpha}{\sigma}}\bigg)+\beta\bigg]
\frac{\frac{1}{\alpha}[(\frac{\beta}{\alpha}+\frac{1}{\sqrt{\alpha\sigma}})+1+\frac{1}{\sqrt{\alpha\sigma}}]}{2\sqrt{1+\frac{\beta}{\alpha}+\frac{2}{\sqrt{\alpha\sigma}}}},
\end{align}
which leads to 
$\frac{\partial H_1}{\partial \alpha} \geq  \lambda_N\bigg(2\alpha+\beta+3\sqrt{\frac{\alpha}{\sigma}}+\frac{1}{\sigma}+\frac{1}{\sigma\sqrt{\sigma\alpha}}\bigg)  \sqrt{1+\frac{\beta}{\alpha}+\frac{2}{\sqrt{\alpha\sigma}}}-\bigg[\lambda_N\bigg(\alpha+\frac{1}{\sigma}\bigg)\bigg(\alpha+\beta+2\sqrt{\frac{\alpha}{\sigma}}\bigg)+\beta\bigg] \frac{\frac{1}{\alpha}(1+\frac{\beta}{\alpha}+\frac{2}{\sqrt{\alpha\sigma}})}{2\sqrt{1+\frac{\beta}{\alpha}+\frac{2}{\sqrt{\alpha\sigma}}}}= \bigg[\lambda_N\bigg(\frac{3}{2}\alpha+\frac{1}{2}\beta+2\sqrt{\frac{\alpha}{\sigma}}+\frac{1}{2\sigma}-\frac{\beta}{2\alpha\sigma}\bigg)-\frac{\beta}{2\alpha}\bigg]\sqrt{1+\frac{\beta}{\alpha}+\frac{2}{\sqrt{\alpha\sigma}}}.$
When Assumption \ref{assump2} holds, one has $ \frac{\partial H_1}{\partial \alpha}\geq 0,$
which gives $\frac{\partial {E_{i_l}}}{\partial \alpha}\geq 0.$
Thus, the lower bound of the required initial energy $E_i(0)$ is an increasing function of $\alpha$.

Similarly, it can be shown that
\begin{align}
\frac{\partial {E_{i_l}}}{\partial \sigma}=V_\mathcal{L}(0)\frac{\partial H_1}{\partial \sigma}H_2+V_\mathcal{L}(0)\frac{\partial H_2}{\partial \sigma}H_1, \nonumber
\end{align}
where
\begin{align*}
\frac{\partial H_1}{\partial \sigma}&=-\frac{\lambda_N}{\sigma^2}\bigg(\alpha+\beta+3\sqrt{\frac{\alpha}{\sigma}}+\alpha\sqrt{\alpha\sigma}\bigg)\sqrt{1+\frac{\beta}{\alpha}+\frac{2}{\sqrt{\sigma\alpha}}}\\
&\quad -\frac{\lambda_N(\alpha+\frac{1}{\sigma})(\alpha+\beta+2\sqrt{\frac{\alpha}{\sigma}})+\beta}{2\sigma\sqrt{\alpha\sigma}\sqrt{1+\frac{\beta}{\alpha}+\frac{2}{\alpha\sigma}}} \nonumber \\
& \leq 0,\\
\frac{\partial H_2}{\partial \sigma}&=-\frac{\lambda_NT\frac{\alpha}{\sigma^2}(1+\frac{\beta}{\alpha}+\frac{3}{\sqrt{\alpha\sigma}})}{4\sqrt{1+\frac{\beta}{\alpha}+\frac{2}{\sqrt{\alpha\sigma}}}}e^{-\lambda_N\sqrt{\frac{\alpha}{\sigma}(1+\frac{\beta}{\alpha}+\frac{2}{\sqrt{\alpha\sigma}})}T}\nonumber \\
& \leq 0,
\end{align*}
which further leads to $\frac{\partial {E_{i_l}}}{\partial \sigma}\leq 0.$
Also, one has $\frac{\partial {E_{i_l}}}{\partial \beta}=V_\mathcal{L}(0)\frac{\partial H_1}{\partial \beta}H_2+V_\mathcal{L}(0)\frac{\partial H_2}{\partial \beta}H_1,$ and
\begin{align*}
\frac{\partial H_1}{\partial \beta}&=[\lambda_N(\alpha+\frac{1}{\sigma})+1]\sqrt{1+\frac{\beta}{\alpha}+\frac{2}{\sqrt{\alpha\sigma}}}\\
&\quad+\frac{[\lambda_N(\alpha+\frac{1}{\sigma})(\alpha+\beta+2\sqrt{\frac{\alpha}{\sigma}})+\beta]}{2\alpha\sqrt{1+\frac{\beta}{\alpha}+\frac{2}{\sqrt{\alpha\sigma}}}}\geq0,\\
\frac{\partial H_2}{\partial \beta}&=\frac{\lambda_NT}{4\sigma\sqrt{\frac{\alpha}{\sigma}(1+\frac{\beta}{\alpha}+\frac{2}{\sqrt{\alpha\sigma}})}}e^{-\lambda_N\sqrt{\frac{\alpha}{\sigma}(1+\frac{\beta}{\alpha}+\frac{2}{\sqrt{\alpha\sigma}})}T}\geq0,
\end{align*}
which leads to $\frac{\partial {E_{i_l}}}{\partial \beta}\geq 0.$
Thus, the lower bound of the required initial energy $E_i(0)$ is a decreasing function of $\sigma$ and an increasing function of $\beta$. The proof is hence completed.
\end{proof}

It follows from Theorems~\ref{the3} and \ref{the4} that the lower bounds on the achievable termination time and the required initial energy are both decreasing functions of $\sigma$. Hence, one can increase the value of $\sigma$ to reduce the formation time and the energy consumption. However, $\sigma$ is not allowed to be arbitrarily large, because the condition $\sigma< \lambda_2$ must be met as indicated by Theorem~\ref{the1}. Meanwhile, a large value of $\alpha$ is capable of speeding the convergence of the formation algorithm, yet at the cost of more energy consumption. Finally, the resistance coefficient $\beta$ is both harmful to convergence time as well as energy consumption. That is, a larger value of $\beta$ will lead to a longer convergence time and more energy consumption.

\section{Simulation}
\label{sect:sim}
In this section, numerical examples are presented to verify the theoretical results. Let $N=5$ and $n=2$. The initial states of the agents are given by $x_1(0)=(0,4,0,0)$, $x_2(0)=(12,9,0,0)$, $x_3(0)=(5,3,0,0)$, $x_4(0)=(9,3,0,0)$, and $x_5(0)=(4,0,0,0)$. The desired relative states are set to $d_{12}=(5,-2.5,0,0)$, $d_{23}=(5,2.5,0,0)$, $d_{34}=(-5,2.5,0,0)$, $d_{14}=(-5,-2.5,0,0), d_{15}=(5,0,0,0)$, and $d_{53}=(5,0,0,0)$. The initial energy levels are given by $E(0)=\{1000,1200,700,900,500\}$, the termination time is $T=3s$, and the steady-state error tolerance  is $\varepsilon=0.1$. The network topology is given in Fig.~\ref{Fig.1}, for which the eigenvalues of the Laplacian matrix $\mathcal{L}$ are $\mathrm{spec}(\mathcal{L})=\{0,1.382,1.382,3.618,3.618\}$, and the second smallest eigenvalues is $\lambda_2=1.382$.
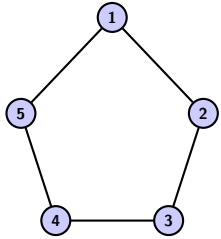
\begin{figure}[htp!]
\centering
\begin{tikzpicture}[-,auto,node distance=1.5cm, thick,main node/.style={circle,scale=.6,fill=blue!20,draw,font=\sffamily\bfseries}]
  \node[main node] (1) at(2.25,2.7027) {1};
  \node[main node] (2) at(3.4615,1.42665) {2};
  \node[main node] (3) at(3,0) {3};
  \node[main node] (4) at(1.5,0) {4};
  \node[main node] (5) at(1.0365,1.42665) {5};	
  \path[every node/.style={font=\sffamily\small}]
    (1) edge node [right] {} (2)
    (2) edge node [right] {} (3)
    (3) edge node [right] {} (4)
    (4) edge node [right] {} (5)
    (5) edge node [right] {} (1);
\end{tikzpicture}
\caption{The network topology}
\label{Fig.1}
\end{figure}

\begin{figure*}[htp!]
\centering
\setcounter{figure}{1}
\subfloat[$T_l$ vs. $\alpha$ ($\beta=0.2$)]{\label{Fig.2(a)}\includegraphics[width=0.3\textwidth]{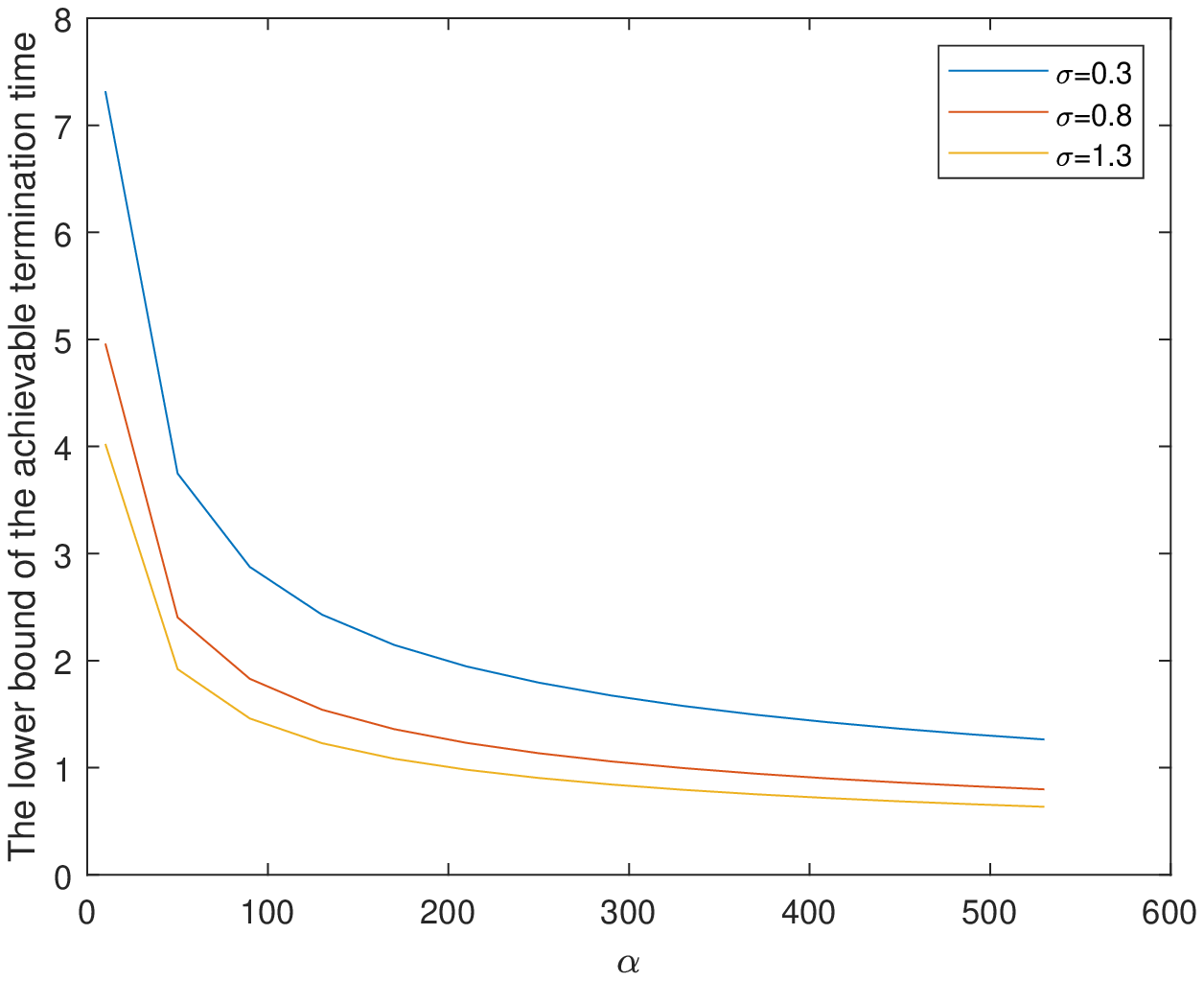}}
\quad
\subfloat[$T_l$ vs. $\sigma$ ($\beta=0.2$)]{\label{Fig.2(b)}\includegraphics[width=0.3\textwidth]{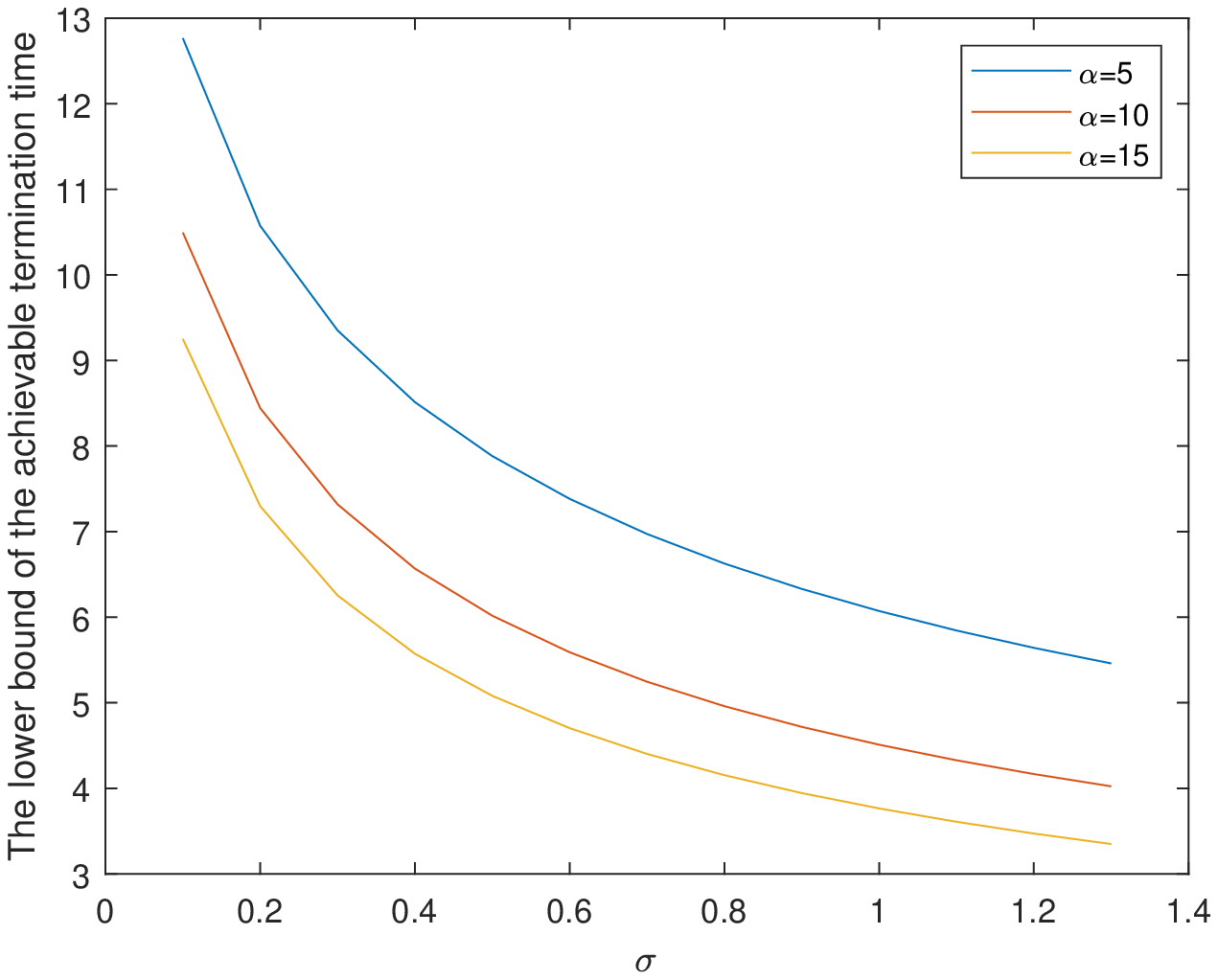}}
\quad
\subfloat[$T_l$ vs. $\beta$ ($\sigma=1.3$)]{\label{Fig.2(c)}\includegraphics[width=0.3\textwidth]{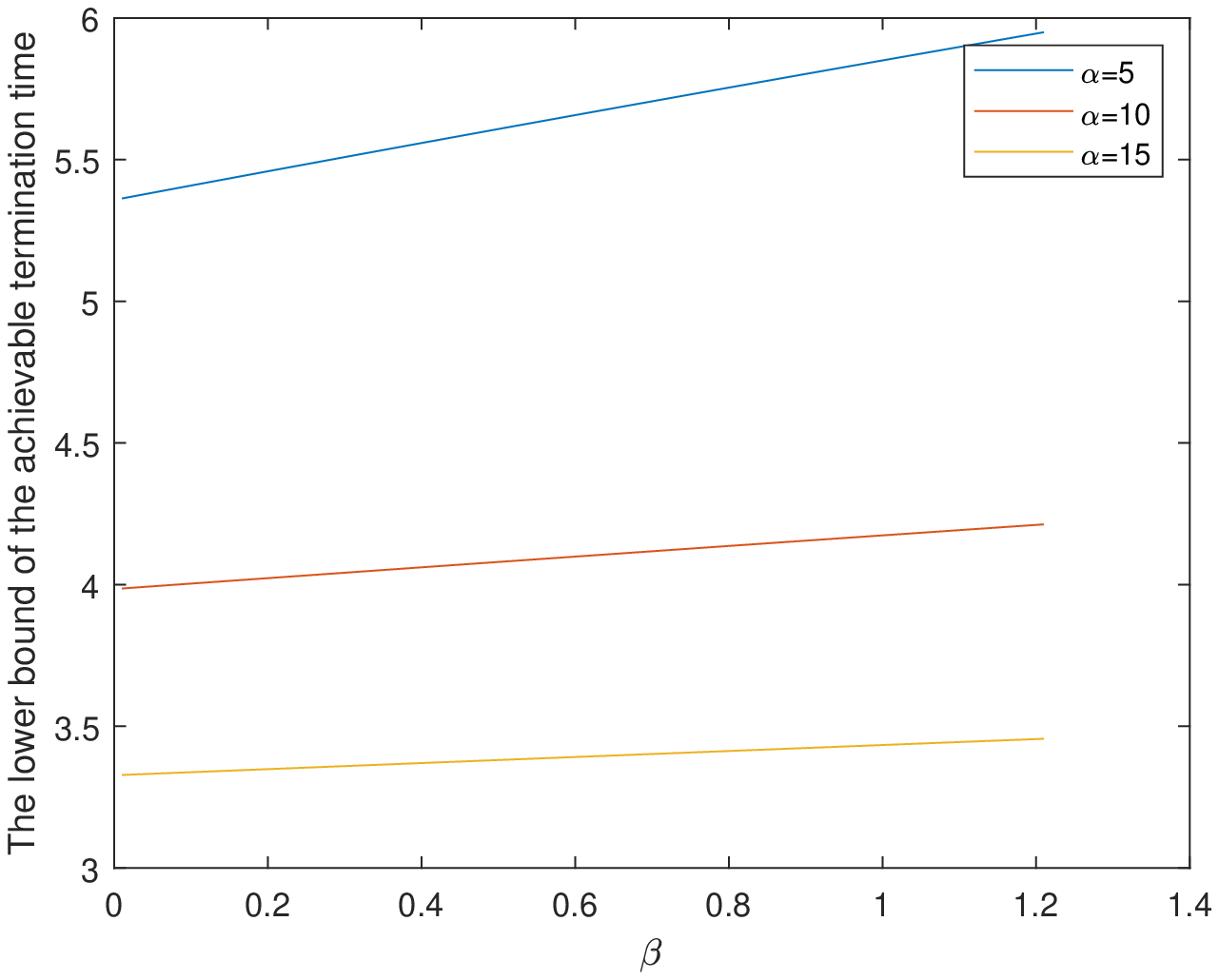}}\\
\caption{The lower bound of the achievable termination time for the multi-agent system.}
\label{Fig.2}
\end{figure*}

\begin{figure*}[htp!]
\centering
\setcounter{figure}{2}
\subfloat[$E_l$ vs. $\alpha$ ($\beta=0.2$)]{\label{Fig.3(a)}\includegraphics[width=0.3\textwidth]{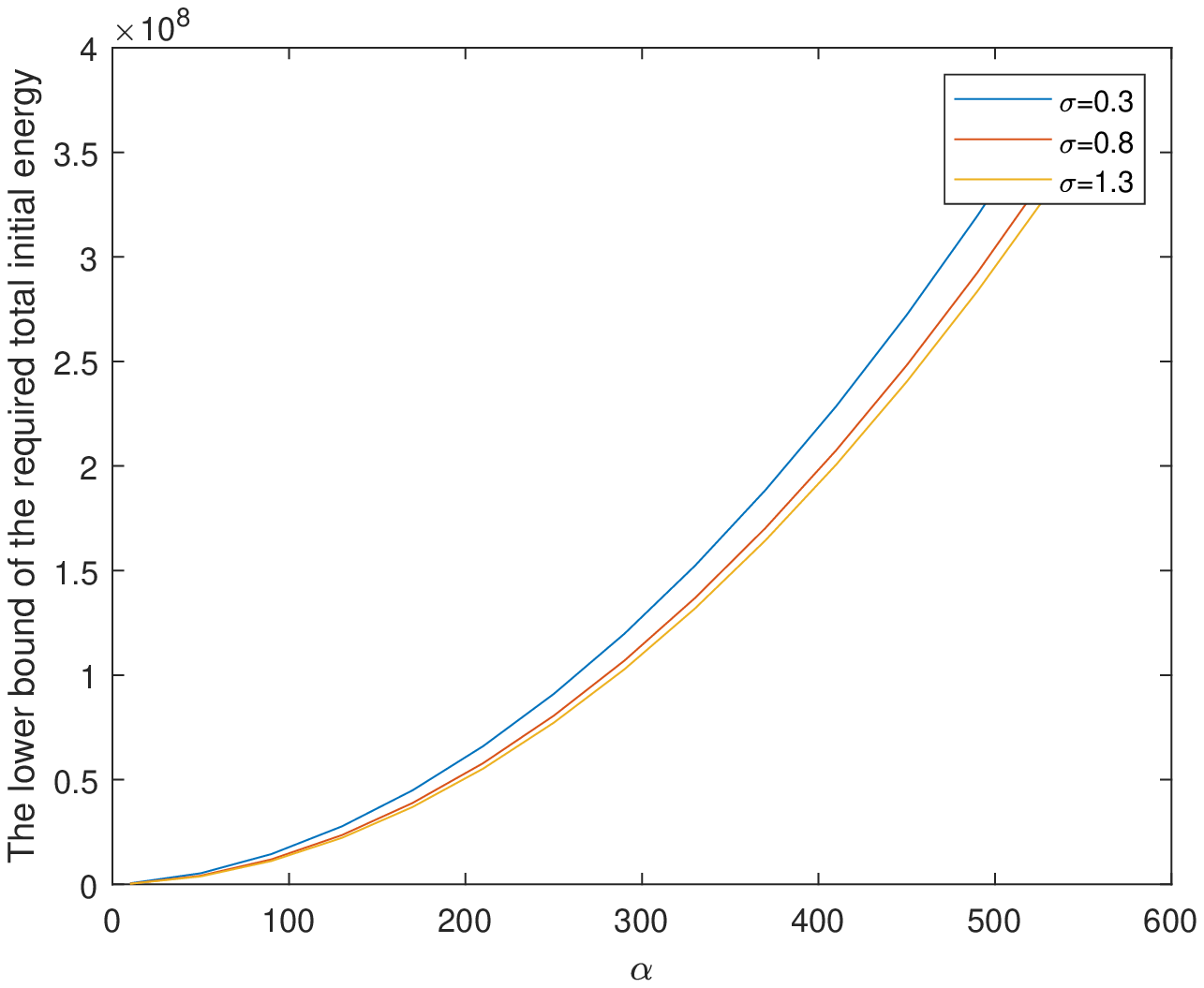}}
\quad
\subfloat[$E_l$ vs. $\sigma$ ($\beta=0.2$)]{\label{Fig.3(b)}\includegraphics[width=0.3\textwidth]{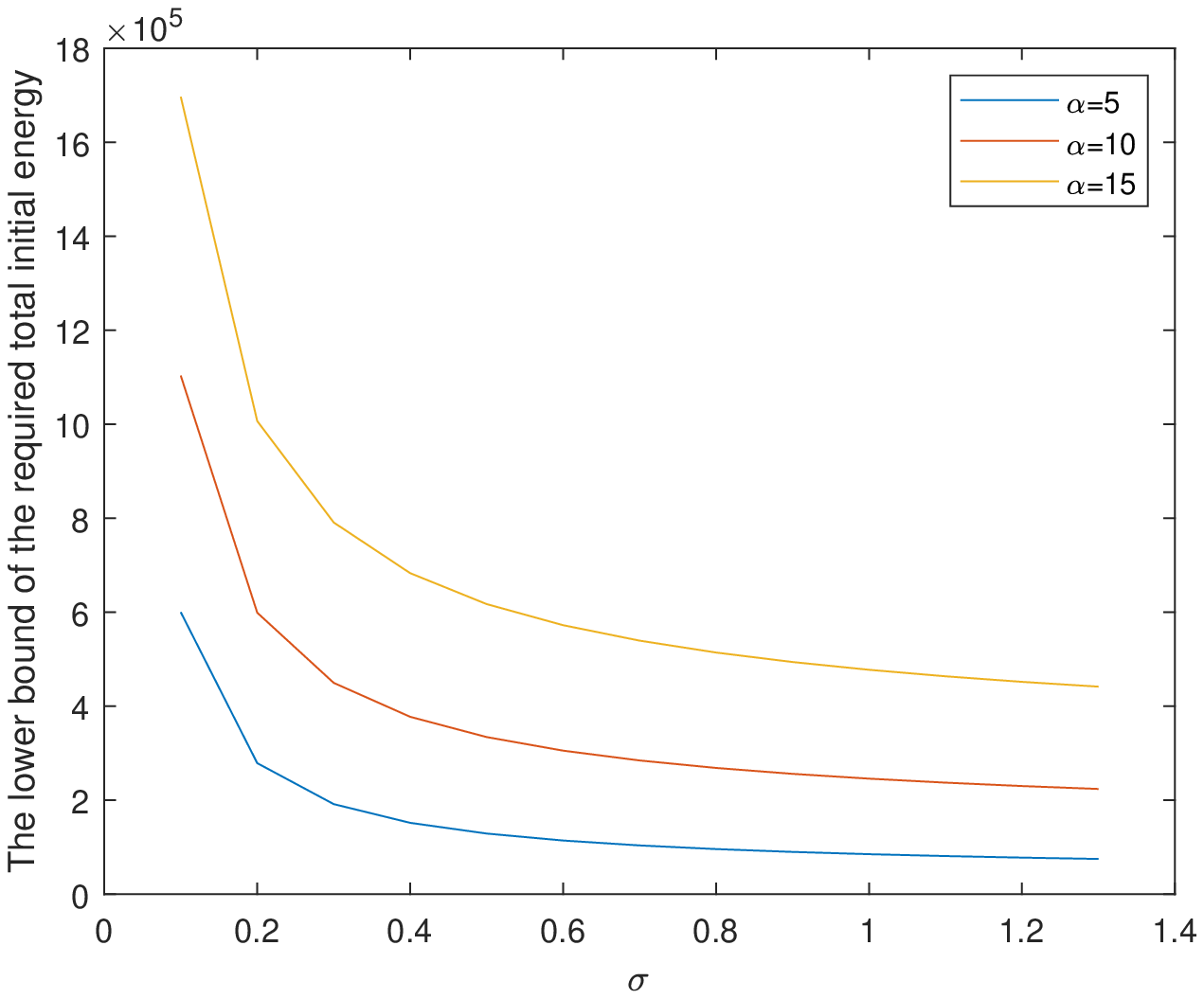}}
\quad
\subfloat[$E_l$ vs. $\beta$ ($\sigma=1.3$)]{\label{Fig.3(c)}\includegraphics[width=0.3\textwidth]{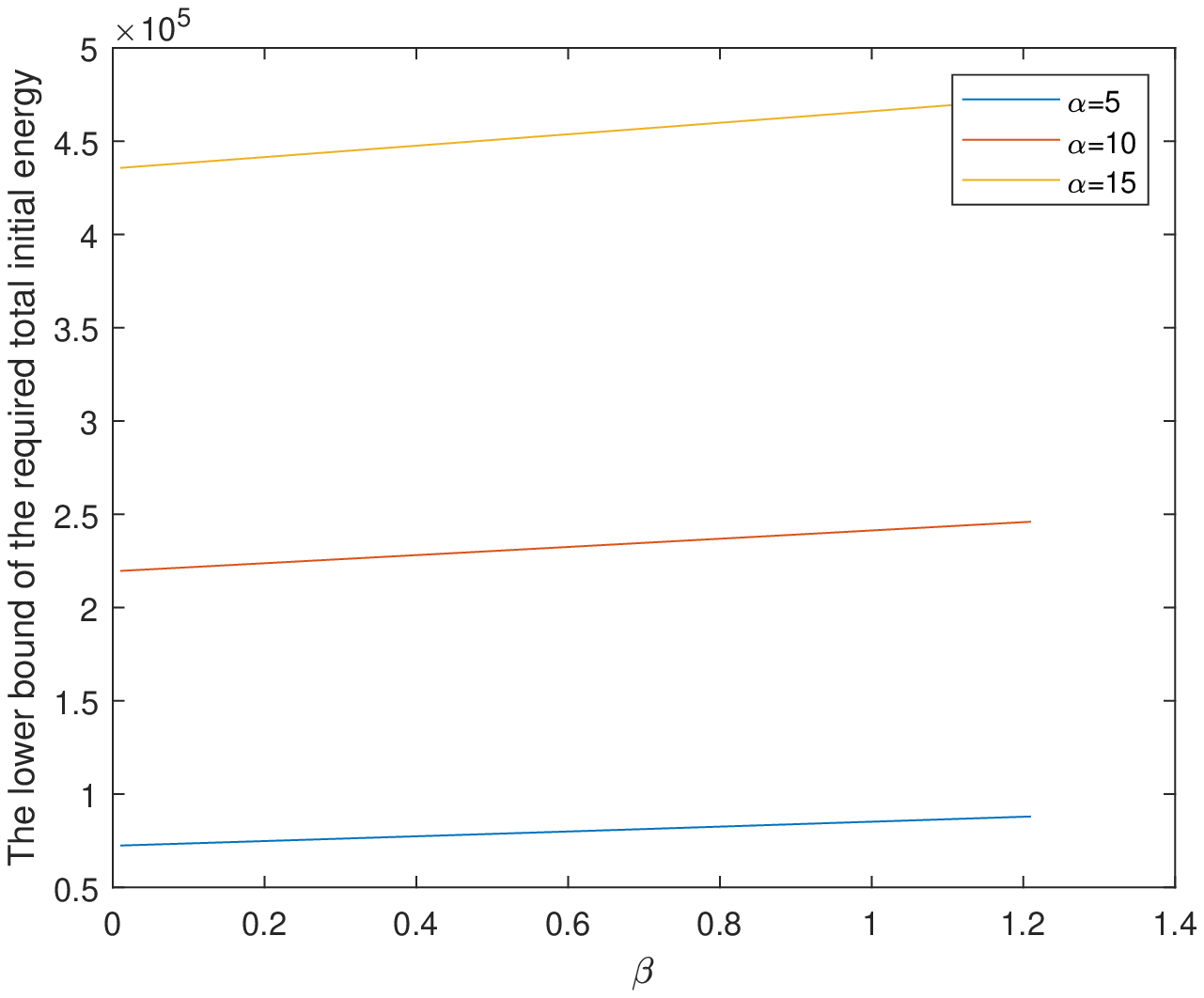}}\\
\caption{The lower bound of the required initial energy for all agents.}
\label{Fig.3}
\end{figure*}

Fig.~\ref{Fig.2} shows the curves of the lower bound of the achievable termination time versus the parameters $\alpha$, $\sigma$ and $\beta$. It can be observed that the lower bound of the achievable termination time is a decreasing function of both $\alpha$ and $\sigma$ and is an increasing function of $\beta$. This is consistent with the theoretical results in Section~\ref{sec:IV}. 
Fig.~\ref{Fig.3} shows the curves of the lower bound of the required total initial energy, i.e., $E_l=\sum_{i=1}^NE_{i_l}$, versus the parameters $\alpha$, $\sigma$ and $\beta$. It can be observed that the lower bound of the required total initial energy is an increasing function of $\alpha$ and $\beta$ and is a decreasing function of $\sigma$. In the following simulation, $\sigma=1.3$ is employed which is smaller than $\lambda_2$.

\begin{figure*}[htp!]
\centering
\setcounter{figure}{3}
\subfloat[Simulation I: $\alpha=450$, $\beta=0.2$]{\label{Fig.4(a)}\includegraphics[width=0.3\textwidth]{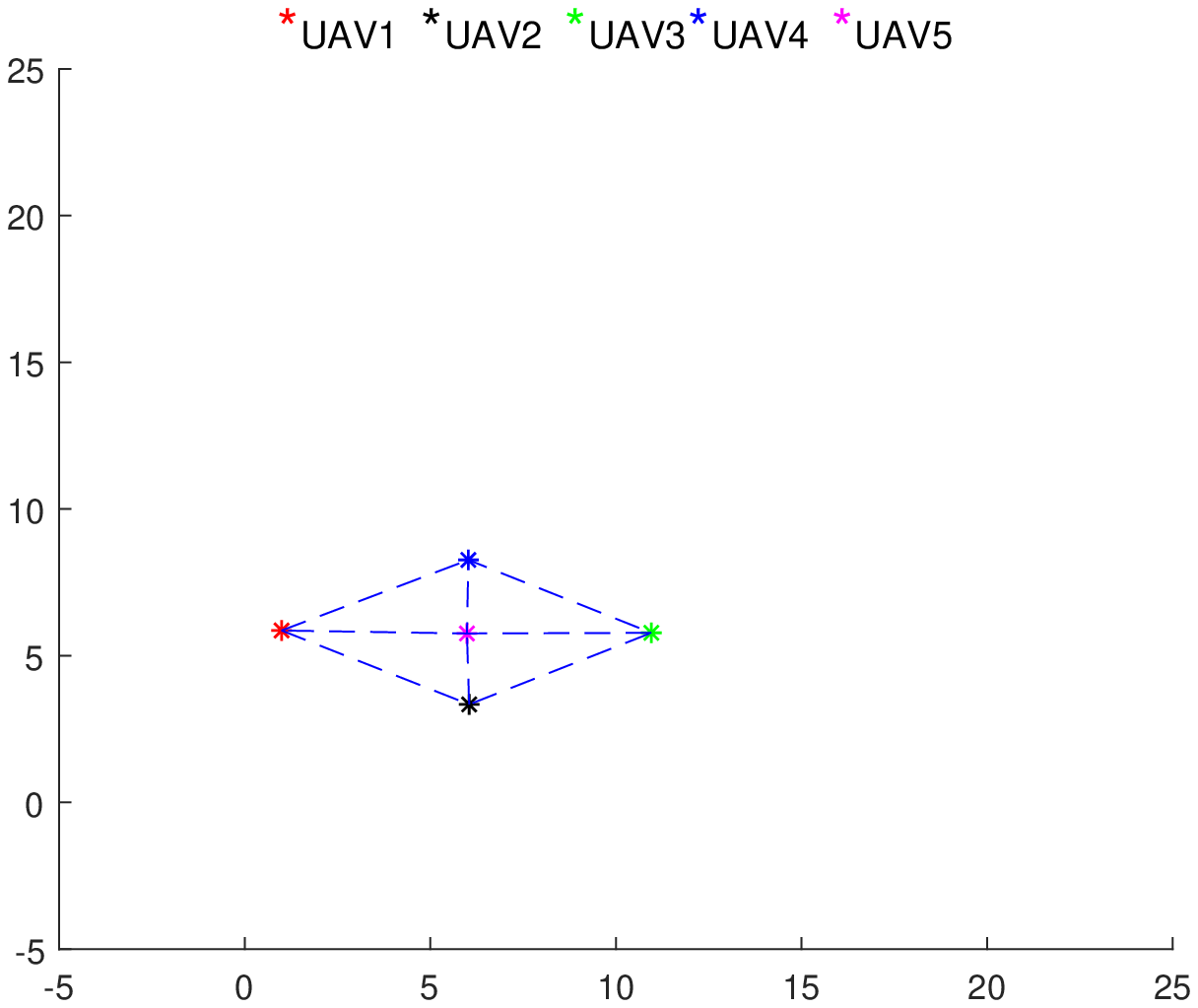}}
\quad
\subfloat[Simulation II: $\alpha=5$, $\beta=0.3$]{\label{Fig.4(b)}\includegraphics[width=0.3\textwidth]{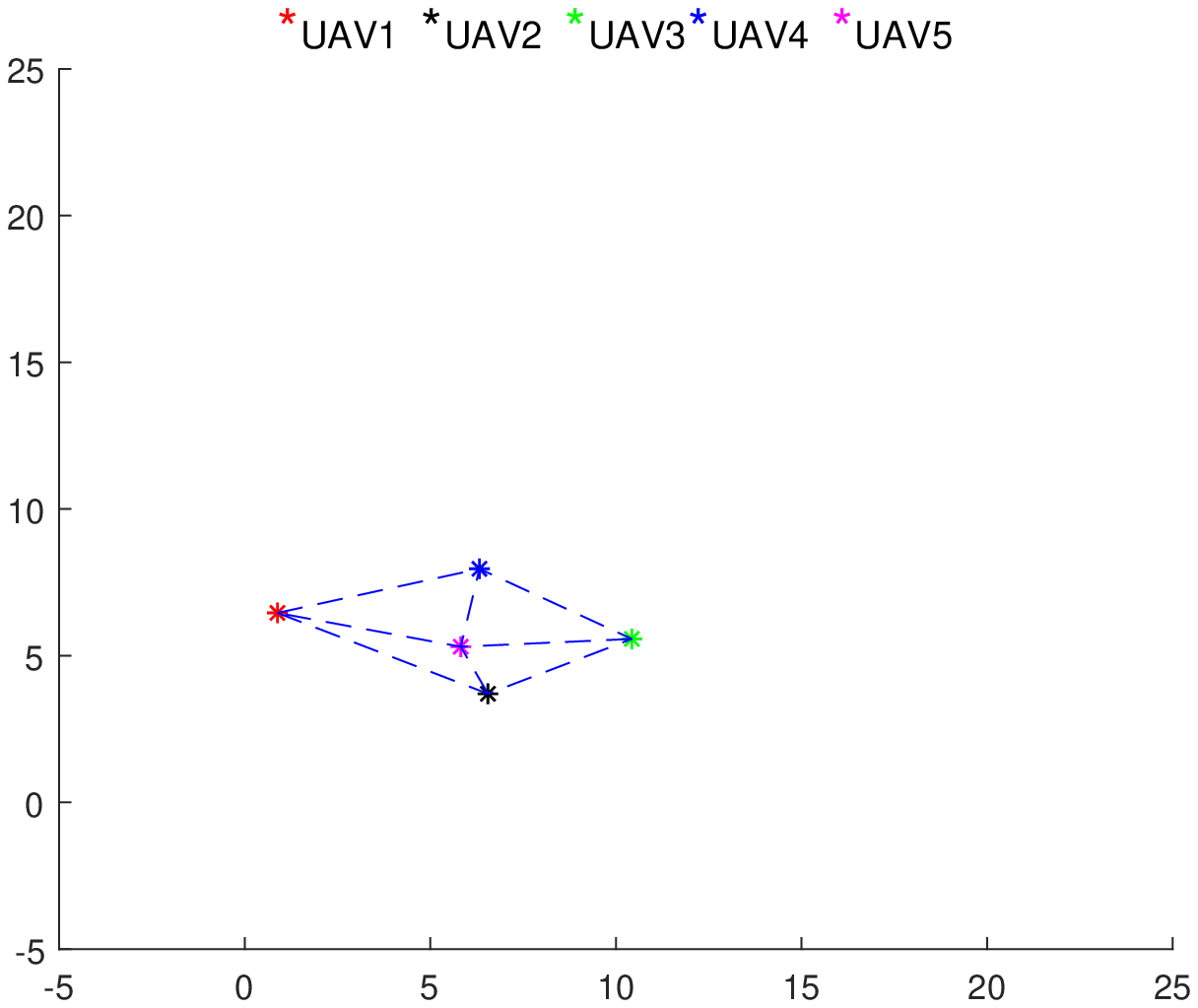}}
\quad
\subfloat[Simulation III: $\alpha=853$, $\beta=0.7$]{\label{Fig.4(c)}\includegraphics[width=0.3\textwidth]{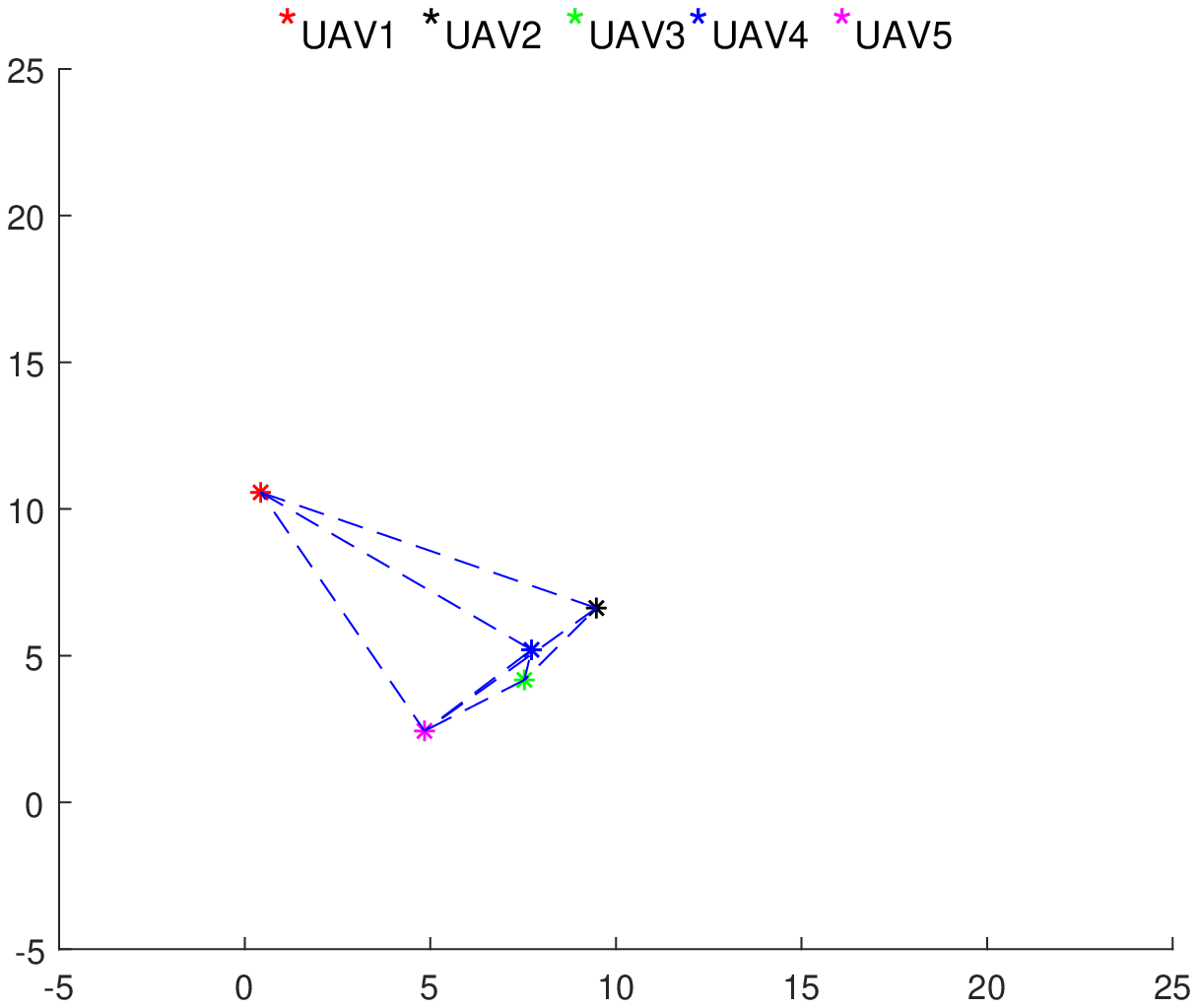}}\\
\caption{The final formation shape of the multi-agent system. The position of each agent is indicated by $*$.}
\label{Fig.4}
\end{figure*}

\begin{figure*}[htp!]
\centering
\setcounter{figure}{4}
\subfloat[]{\label{Fig.5(a)}\includegraphics[width=0.3\textwidth]{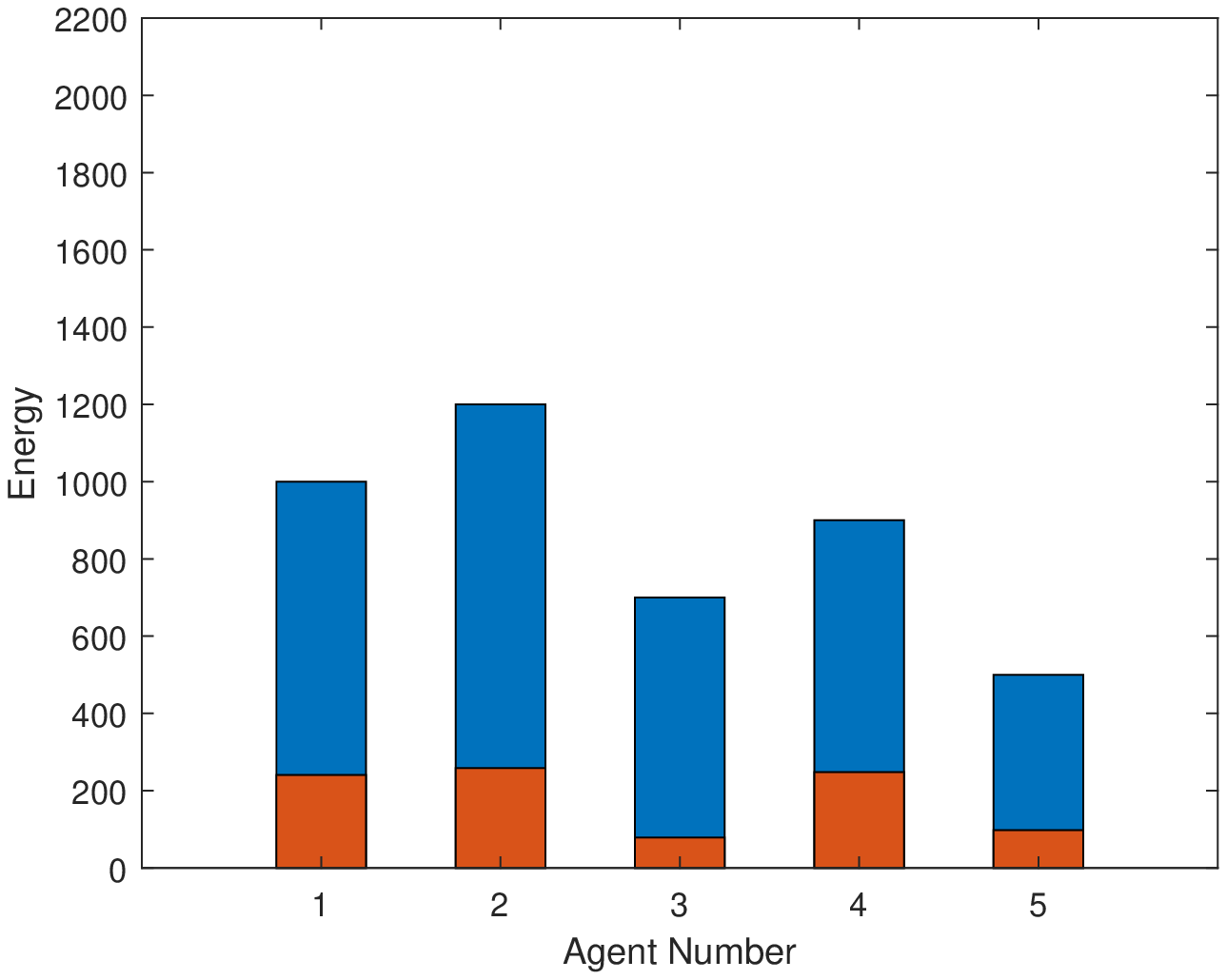}}
\quad
\subfloat[]{\label{Fig.5(b)}\includegraphics[width=0.3\textwidth]{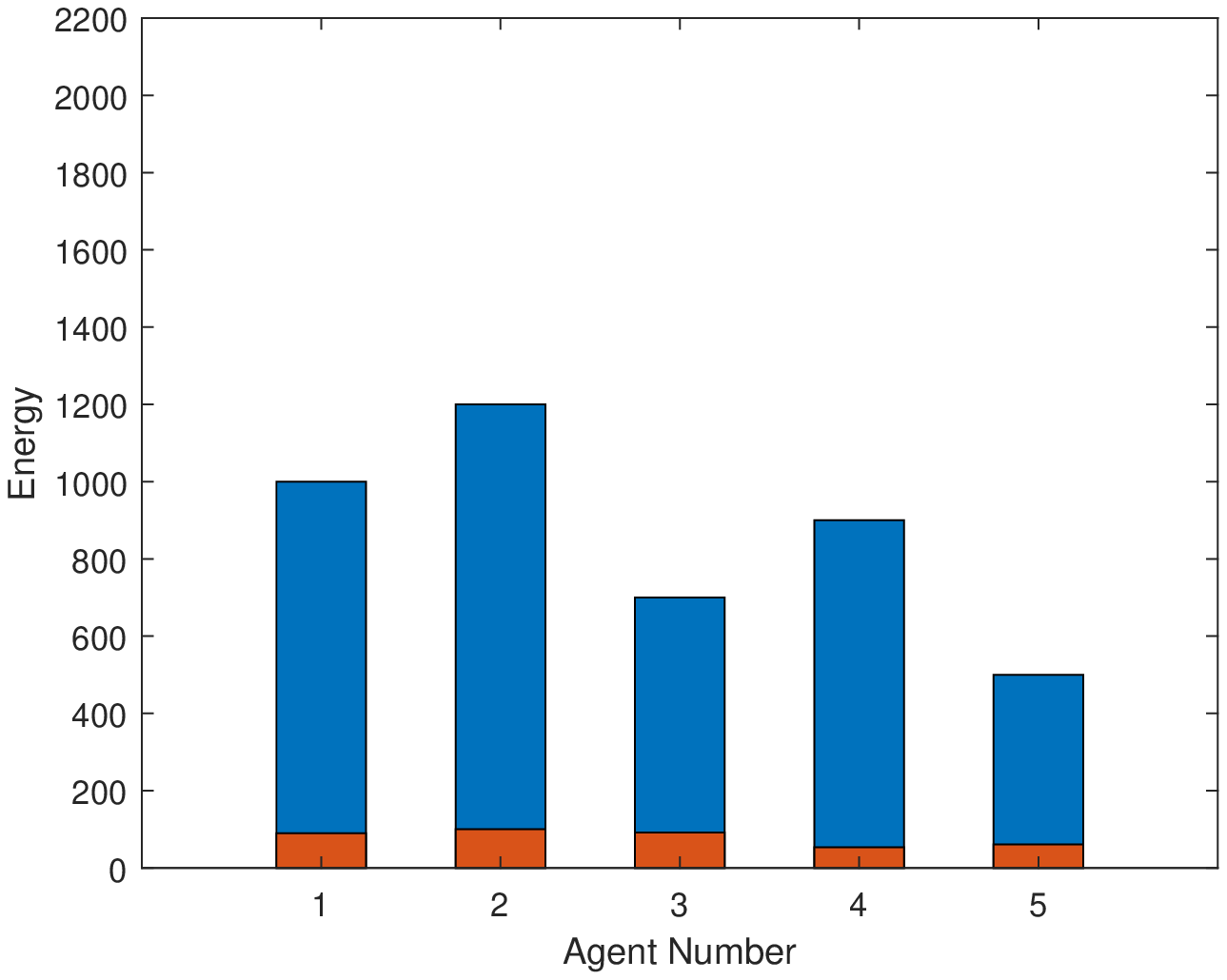}}
\quad
\subfloat[]{\label{Fig.5(c)}\includegraphics[width=0.3\textwidth]{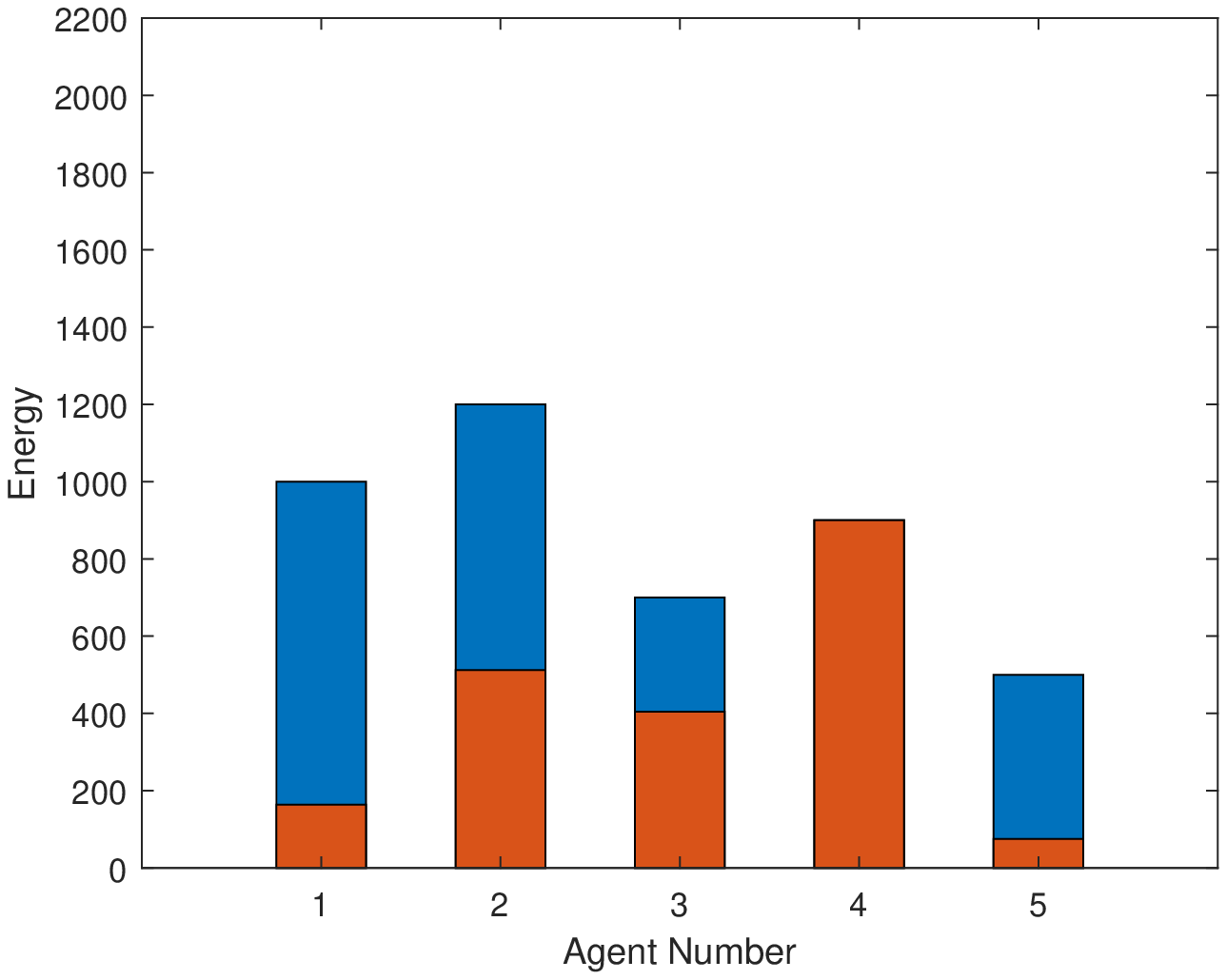}}\\
\caption{The energy consumption of the multi-agent system. The red color indicates the energy consumed, whilst the blue color indicates the remaining energy level.}
\label{Fig.5}
\end{figure*}

\begin{table}[h]
\setlength{\tabcolsep}{1.45mm}
\centering
\caption{Values of the parameters}
\begin{tabular}{c|c|c|c|c}
\hline
Simulation & Value of & Value of &Value of &Formation \\
  Number   & $\alpha$ & $\sigma$ & $\beta$ &Time $t_f$(s)   \\
\hline
\uppercase\expandafter{\romannumeral1} & 450& 1.3 & 0.2 &0.49\\
\uppercase\expandafter{\romannumeral2} & 5& 1.3& 0.3 &N/A\\
\uppercase\expandafter{\romannumeral3} &  853& 1.3 &0.7 &N/A\\
\hline
\end{tabular}
\label{tab1}
\end{table}

Table~\ref{tab1} shows three sets of the values of $\alpha$, $\sigma$, and $\beta$. Only the first set satisfies the energy and time constraints, i.e., Eqs.~(\ref{12}) and (\ref{13}), simultaneously. The second set violates the termination time constraint (\ref{12}), while the third set violates the energy constraint \eqref{13}. Fig.~\ref{Fig.4} depicts the final formation shape of the multi-agent system in each case. Fig.~\ref{Fig.5} shows the energy consumption of the agents during the formation task. It can be observed that in the first case, the formation is achieved and the energy is not exhausted for each agent; in the second case, the formation task is not accomplished by the end of the termination time $T=3s$; in the third case, the energy of agent~$4$ is exhausted before the formation mission is accomplished.

\section{Conclusions}
\label{sect:concl}
This paper presents a globally optimal distributed formation control algorithm and a comprehensive analysis of the roles of energy levels, termination time, control parameters, as well as the network topology on achieving energy and time constrained formation control. Two lower bounds on the required initial energy levels and on the achievable termination time  are explicitly given, which help answer the question whether a distributed formation control problem is feasible under prescribed hard constraints on the termination time and energy expenditure. Additionally, several monotonicity properties in relation to the control parameters, in particular, the achievable termination time and the required initial energy with respect to those control parameters are derived. These properties can be properly exploited to facilitate the formation control design. The formulation of this paper provides a solution to LQR-based formation control under constraints of both termination time and energy. The future topic can be directed to nonlinear agent dynamics and directed network topologies.  




\bibliographystyle{IEEEtran}
\bibliography{refs}

\end{document}